\documentclass{article}

\usepackage{iclr2027_conference,times}

\usepackage{amsmath,amsfonts,bm}

\def\eqref#1{equation~\ref{#1}}

\def\1{\bm{1}}

\DeclareMathAlphabet{\mathsfit}{\encodingdefault}{\sfdefault}{m}{sl}
\SetMathAlphabet{\mathsfit}{bold}{\encodingdefault}{\sfdefault}{bx}{n}

\usepackage{amsmath,amsfonts,amssymb, amsthm}
\newtheorem{proposition}{Proposition}
\usepackage{braket}
\usepackage{graphicx}
\usepackage{booktabs}
\usepackage{multirow}
\usepackage{hyperref}
\usepackage{url}

\iclrfinalcopy 

\title{Prototype-Guided Optimization of \\ Quantum Data Embeddings with Linear Cost}

\author{
Seokhoon Jeong \\
Department of Statistics and Data Science \\
Yonsei University \\
Republic of Korea \\
\texttt{detra0504@yonsei.ac.kr}
\And
Daniel K. Park \\
Department of Statistics and Data Science \\
Department of Applied Statistics \\
Department of Quantum Information \\
Yonsei University \\
Republic of Korea \\
\texttt{dkd.park@yonsei.ac.kr}
}

\begin{document}

\maketitle

\begin{abstract}
    In supervised quantum machine learning with classical data, classification performance critically depends on the geometry induced by quantum data embedding. 
    Existing data-adaptive approaches rely on pairwise fidelity evaluations between samples, which scale as $O(B^2)$ for $B$ samples and create a computational bottleneck in iterative embedding optimization.
    To address this issue, we propose a prototype-guided framework for efficient data-adaptive quantum embedding.    
    Using class-prototype separation and sample-to-prototype compactness, we derive a measurable lower certificate for class-ensemble distinguishability and construct a prototype-based objective requiring only $O(B)$ fidelity evaluations. 
    We integrate this objective into an end-to-end pipeline that combines task-specific quantum circuit architecture optimization with continuous refinement of the data-embedding parameters.
    Across five benchmark datasets with up to 20 qubits, our method retains competitive quantum kernel support vector machine (SVM) performance while reducing fidelity-evaluation cost, with particularly strong gains over budget-matched pairwise optimization on T-MNIST.
    Geometric analyses examine how the learned embeddings relate to kernel alignment and classification margins, revealing dataset-dependent benefits and limitations.
    We further derive a prototype-based classifier in the reproducing kernel Hilbert space that provides an explicit witness for the SVM objective and a low-cost proxy for model selection, and extend the framework to multiple prototypes for multimodal class structures.
\end{abstract}

\section{Introduction}

Quantum machine learning (QML) seeks to harness quantum information processing to enhance how data are represented, processed, and learned from \citep{qml_biamonte,qml_intro}. When classical data are used in a quantum learning model, they must first be encoded into quantum states, making the choice of quantum data embedding a fundamental design decision~\citep{q_featuremap,NQE}. The embedding determines how relationships in the input space are represented in quantum state space and can therefore strongly influence the behavior of downstream learning models~\citep{q_featuremap,powerofdata,NQE}. 

This role is particularly evident in supervised QML, where the objective is to learn a predictive relation between input data and their labels. In this setting, the geometry of the embedded states influences how effectively different classes can be distinguished~\citep{TQE,NQE,quantum_margin}. This dependence appears across different quantum classification models. In quantum kernel methods, fidelities between embedded quantum states define the kernel used by a classical learner such as a support vector machine (SVM)~\citep{q_featuremap,zzfeaturemap}, making the induced geometry directly relevant to kernel alignment and predictive performance~\citep{powerofdata,qk_inductive,covariant_qkernel,TQE_physicalreview}. In variational quantum classifiers, the embedding likewise affects expressive power~\citep{encoding_expressive,data_reupload} and predictive performance~\citep{NQE,GQE}, trainability~\citep{NQE, qml_trainability}, and generalization~\citep{quantum_margin,PQC_encoding_gen,qml_generalization_banchi}. Designing task-adaptive quantum embeddings is therefore a central problem in supervised QML.

The importance of quantum data embedding in supervised QML has motivated the development of various data-adaptive embedding approaches.
Existing methods construct embeddings by learning classical transformations of the input values~\citep{NQE}, optimizing trainable parameters within a predefined circuit ansatz~\citep{TQE,TQE_physicalreview}, or extending the optimization to the circuit architecture and its parameterization~\citep{GQE, IOP_NSGA2, GA_moretti,graph_search,neuro_serach_iclr}. Despite their methodological differences, these approaches share the goal of inducing a quantum representation that is better aligned with the target learning task.

Many approaches characterize this task through the pairwise geometry of embedded samples. Whether assessed through supervised learning objectives, kernel--target alignment, or downstream quantum kernel SVM (QKSVM) performance, exact empirical evaluation commonly relies on pairwise state-overlap information~\citep{NQE,TQE,TQE_physicalreview,IOP_NSGA2,GA_moretti,neuro_serach_iclr}. Therefore, exact evaluation over $B$ samples requires $O(B^2)$ pairwise quantum fidelity evaluations, and this cost causes computational bottleneck of the repeating optimization process. Subsampling and learned surrogate models can reduce this overhead, but rely on partial objective estimates, or on the performance of the surrogate model~\citep{sub_qkernel,neuro_serach_iclr}.

To address this bottleneck, we propose a prototype-guided framework that replaces dense sample-pair comparisons with sample-to-prototype relations. Prototype separation and sample-to-prototype compactness provide a measurable lower certificate for class-ensemble distinguishability and motivate an objective requiring $O(B)$ fidelity evaluations for $B$ samples with fixed number of prototypes. We further construct a prototype-based reproducing kernel Hilbert space (RKHS)~\citep{rkhs_paulsen} classifier as an explicit witness for the soft-margin SVM objective~\citep{svm_obj} and evaluate it as a low-cost proxy for model selection. The framework combines genetic algorithm (GA)~\citep{GA} architecture search with Neural Quantum Embedding (NQE)~\citep{NQE} refinement and extends to multiple prototypes for multimodal class structures. Across five benchmark datasets with up to 20 qubits, the method retains competitive QKSVM performance with lower fidelity-evaluations. And the advantage is observed on T-MNIST. Analyses of our theoretical lower bound for class-ensemble distinguishability, centered kernel-target alignment (CKTA)~\citep{CKTA}, and sample-wise SVM margins clarify both the observed gains and the limitations of mean-based objective, while the multi-prototype extension improves classification on a multimodal benchmark.

\section{Prototype-Guided Quantum Data Embedding}
\label{sec:methods}
\subsection{Quantum Data Embedding and Pairwise Optimization}
\label{subsec:ensemble_diff}
Consider a binary classification problem with a labeled dataset $\mathcal S=\{(\mathbf{x}_i,y_i)\}_{i=1}^{m}$, where $\mathbf{x}_i\in\mathcal X$ denotes a classical input and $y_i\in\{+,-\}$ is its corresponding class label. Let $I_y=\{i:y_i=y\}$ denote the index set of class $y$. We denote $m_y$ the number of samples belonging to class $y$, and $p_y=m_y/m$ be the corresponding empirical class prior. Given an $n$-qubit quantum data embedding circuit $U_{\Phi}(\mathbf{x})$, each input $\mathbf{x}_i$ is mapped to the pure quantum state $ |\psi_{\Phi}(\mathbf{x}_i)\rangle =  U_{\Phi}(\mathbf{x}_i)|0\rangle^{\otimes n}$, with associated density matrix $\rho_{\Phi}(\mathbf{x}_i) = |\psi_{\Phi}(\mathbf{x}_i)\rangle \langle\psi_{\Phi}(\mathbf{x}_i)|$.

After embedding, the samples in each class construct an empirical class-ensemble state, $\bar\rho_y = \frac{1}{m_y}\sum_{i:y_i=y}\rho_{\Phi}(\mathbf{x}_i).$ The trace norm of the prior-weighted difference,
\begin{equation}
\label{ensemble_diff}
    \Delta_{\Phi}
    =
    \left\|
    p_+\bar{\rho}_+
    -
    p_-\bar{\rho}_-
    \right\|_1 = 2D_{\mathrm{tr}}(p_+\bar{\rho}_+, p_-\bar{\rho}_-),
\end{equation}
quantifies the distinguishability between the two class ensembles~\citep{Bae_2015} where $D_\mathrm{tr}(\cdot,\cdot)$ denotes trace distance between two quantum states. $\Delta_\Phi$ has a direct operational meaning in binary quantum-state discrimination. In particular, it determines a lower bound on the empirical misclassification risk~\citep{qml_generalization_banchi,NQE}, and upper bounds the mean classification margin associated with margin-based generalization~\citep{quantum_margin}. Moreover, $\Delta_{\Phi}$ cannot be increased by subsequent quantum processing because the trace norm is contractive under completely positive trace-preserving maps~\citep{NQE}. This quantity is therefore an important embedding-level quantity that governs classification performance and generalization in QML.

Evaluating $\Delta_{\Phi}$ directly is expensive because it requires spectral information about the mixed-state difference operator. This motivates the use of a Hilbert--Schmidt (HS) relaxation which expresses the class-ensemble difference in terms of within-class purities and cross-class overlap~\citep{TQE,NQE}. For pure-state embeddings, these quantities are reduced to pairwise fidelity kernel $k_\Phi(\mathbf{x}_i,\mathbf{x}_j) = \operatorname{Tr}[\rho_\Phi(\mathbf{x}_i)\rho_\Phi(\mathbf{x}_j)]$. $\Delta_\Phi$ therefore admits the following lower bound.
\begin{equation}
    \Delta_{\Phi}
    \geq
    \frac{1}{m}
    \Bigg[
    \underbrace{
    \sum_{y\in\{+,-\}}
    \sum_{i,j\in I_y}
    k_{\Phi}(\mathbf{x}_i,\mathbf{x}_j)
    }_{\text{within-class purity terms}}
    -
    \underbrace{
    2
    \sum_{i \in I_+}
    \sum_{j \in I_-}
    k_{\Phi}(\mathbf{x}_i,\mathbf{x}_j)
    }_{\text{cross-class overlap term}}
    \Bigg]^{1/2}.
    \label{eq:pairwise_ensemble_lower_bound}
\end{equation}

The derivation is in Appendix~\ref{app:hs_relaxation}. This inequality identifies a class-discriminative geometry where same-class samples cluster while different-class samples remain well separated. The perspective provides a common geometric interpretation of existing data-adaptive quantum embedding methods. For multiclass datasets, we apply the analysis to each class pair, with $m$ and the empirical priors defined within that pair, and report the mean pairwise class-ensemble distinguishability.

Even when the connection between this geometry and class-ensemble distinguishability is not made explicit, many data-adaptive embedding methods pursue closely related objectives. \citet{TQE} and \citet{NQE} explicitly connect this class-ensemble distinguishability to embedding optimization, by optimizing either its HS relaxation or a related fidelity-based objective. Other data-adaptive embedding methods do not make this connection explicit, but employ conceptually similar criteria based on quantum kernel--target alignment~\citep{TQE_physicalreview,neuro_serach_iclr} or downstream QKSVM performance~\citep{IOP_NSGA2,GA_moretti}. 

For $B$ samples, exact evaluation requires $\binom{B}{2} = \frac{B(B-1)}{2} \in O(B^2)$ distinct pairwise kernel evaluations.
This cost is incurred repeatedly during embedding optimization, creating a computational bottleneck. Pair subsampling can reduce the number of evaluated kernel entries, but optimizes only a partial or stochastic estimate of the full pairwise geometry~\citep{NQE,sub_qkernel}. Learned surrogate models can further reduce direct quantum-kernel evaluations, but their effectiveness depends on the accuracy of the surrogate predictor~\citep{neuro_serach_iclr}. These limitations motivate an efficiently computable lower certificate for class-ensemble distinguishability.

\begin{figure}[t]
    \centering
    \begin{minipage}{0.84\linewidth}
        \centering
        \includegraphics[width=\linewidth]{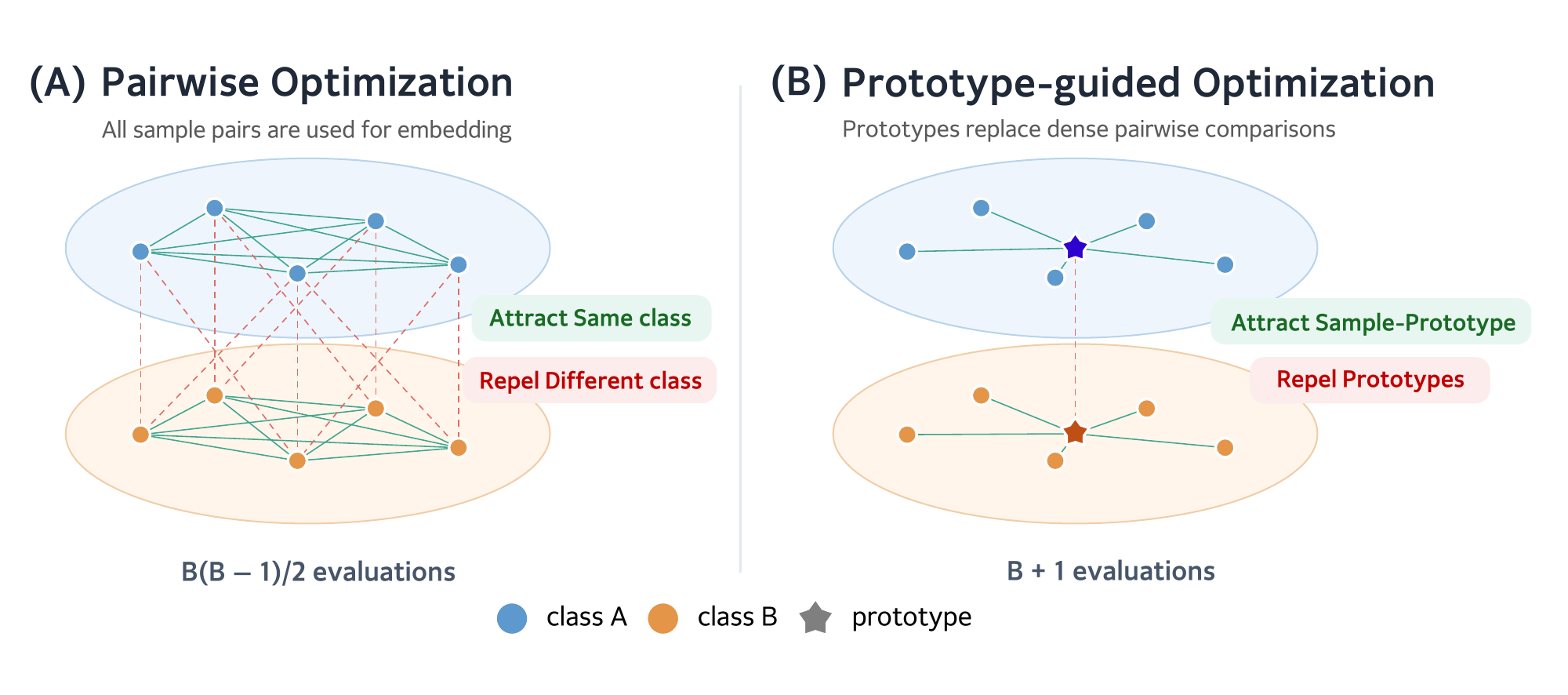}
        \vspace{-10mm}
        \caption{
        Comparison of pairwise and single-prototype-guided quantum embedding optimization.
        \textbf{(a)} Pairwise optimization promotes high fidelity for same-class sample pairs and low fidelity for cross-class pairs, requiring $B(B-1)/2$ evaluations.
        \textbf{(b)} The proposed method promotes high sample-to-class-prototype fidelity and low cross-class prototype fidelity, reducing the cost to $B+1$ evaluations.
        }
        \label{fig:pairwise_prototype}
    \end{minipage}
\end{figure}

\subsection{Prototype-Based Ensemble Certificate}
\label{subsec:prototype_ceritficate}

To construct such a certificate, we consider one prototype per class, $\{\mathbf{c}_+,\mathbf{c}_-\} \in \mathcal{X}$ with the application-dependent construction rule. The prototype quantum state of class $y$ is defined as $\rho_y^c=\rho_\Phi(\mathbf{c}_y)$. The prototype fidelity therefore becomes $F_c=k_\Phi(\mathbf{c}_+,\mathbf{c}_-)$ and their trace distance is $D_c=D_{\mathrm{tr}}(\rho_+^c,\rho_-^c)=\sqrt{1-F_c}$ since we consider pure state embedding.

We define the mean within-class trace-distance radius to quantify within-class compactness around the prototypes after embedding. It measures the average trace distance between the embedded samples and their corresponding prototypes:
\begin{equation}
\bar r_{y}^{\mathrm{tr}} =
\frac{1}{m_y}\sum_{i\in I_y}D_{\mathrm{tr}}\left(\rho_{\Phi}(\mathbf{x}_i),\rho_{y}^{c}\right)=
\frac{1}{m_y}\sum_{i\in I_y}\sqrt{1-k_{\Phi}(\mathbf{x}_i,\mathbf{c}_y)}.
\label{eq:radius}
\end{equation}

The prototype fidelity $F_c$ and the class-wise radii $\bar r_y^{\mathrm{tr}}$ respectively quantify inter-class prototype separation and within-class compactness. Together, they yield the following lower certificate for class-ensemble distinguishability $\Delta_{\Phi}$.

\begin{proposition}[Prototype-based Ensemble Certificate]
\label{prop1}
For arbitrary class prototypes $\mathbf{c}_{+}$ and $\mathbf{c}_{-}$, class-ensemble distinguishability $\Delta_{\Phi}$ satisfies
\begin{equation}
\left\lVert p_{+}\bar{\rho}_{+} - p_{-}\bar{\rho}_{-} \right\rVert_1
\geq
\sqrt{1-4p_{+}p_{-}F_c} - 2 \left( p_{+}\bar r_{+}^{\mathrm{tr}} + p_{-}\bar r_{-}^{\mathrm{tr}} \right).
\label{eq:prototype_certificate}
\end{equation}
\end{proposition}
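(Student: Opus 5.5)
The plan is to compare the empirical ensemble difference with the same difference built from the prototypes alone, and to use the triangle inequality for the trace norm. Define the prototype-weighted operator $A_c = p_+\rho_+^c - p_-\rho_-^c$. The empirical operator is $A = p_+\bar\rho_+ - p_-\bar\rho_-$. Then
\begin{equation*}
\|A\|_1 \geq \|A_c\|_1 - \|A - A_c\|_1 ,
\end{equation*}
and the two pieces will be bounded separately. The first piece should produce the separation term $\sqrt{1-4p_+p_-F_c}$, and the second the compactness penalty.

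For the penalty term, I write
\begin{equation*}
A - A_c = p_+(\bar\rho_+ - \rho_+^c) - p_-(\bar\rho_- - \rho_-^c).
\end{equation*}
Applying the triangle inequality again, and using convexity of the norm over the class average $\bar\rho_y = \frac{1}{m_y}\sum_{i\in I_y}\rho_\Phi(\mathbf{x}_i)$, gives
\begin{equation*}
\|\bar\rho_y - \rho_y^c\|_1 \le \frac{1}{m_y}\sum_{i\in I_y}\|\rho_\Phi(\mathbf{x}_i)-\rho_y^c\|_1 = \frac{2}{m_y}\sum_{i\in I_y} D_{\mathrm{tr}}(\rho_\Phi(\mathbf{x}_i),\rho_y^c) = 2\bar r_y^{\mathrm{tr}},
\end{equation*}
where $D_{\mathrm{tr}} = \tfrac12\|\cdot\|_1$. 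This yields $\|A-A_c\|_1 \le 2(p_+\bar r_+^{\mathrm{tr}} + p_-\bar r_-^{\mathrm{tr}})$. The identity $D_{\mathrm{tr}}=\sqrt{1-k_\Phi}$ for pure states, used in equation~\ref{eq:radius}, is not even needed here beyond the definition.

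The main step, and the only real computation, is evaluating $\|p_+\rho_+^c - p_-\rho_-^c\|_1$ exactly for two pure states $|a\rangle,|b\rangle$ with $|\langle a|b\rangle|^2 = F_c$. The operator $M = p_+|a\rangle\langle a| - p_-|b\rangle\langle b|$ has rank at most two and is supported on $\mathrm{span}\{|a\rangle,|b\rangle\}$. Its two nonzero eigenvalues therefore satisfy
\begin{equation*}
\lambda_1+\lambda_2 = \operatorname{Tr}M = p_+-p_-, \qquad \lambda_1^2+\lambda_2^2 = \operatorname{Tr}M^2 = p_+^2+p_-^2-2p_+p_-F_c .
\end{equation*}
From these, $\lambda_1\lambda_2 = -p_+p_-(1-F_c)\le 0$, so the eigenvalues have opposite signs. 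Hence
\begin{equation*}
\|M\|_1 = |\lambda_1-\lambda_2| = \sqrt{(\lambda_1+\lambda_2)^2 - 4\lambda_1\lambda_2} = \sqrt{(p_+-p_-)^2 + 4p_+p_-(1-F_c)} .
\end{equation*}
Using $p_++p_-=1$, this equals $\sqrt{1-4p_+p_-F_c}$, which is the Helstrom-type formula for pure states.

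Combining the exact value of $\|A_c\|_1$ with the penalty bound gives equation~\ref{eq:prototype_certificate}. No assumption on how the prototypes are chosen is used, which matches the statement holding for arbitrary $\mathbf{c}_\pm$. I expect the eigenvalue computation to be the only delicate point. In particular, the sign argument must be handled in the degenerate case $F_c=1$, where $M$ has a single nonzero eigenvalue $p_+-p_-$ and the formula still gives $|p_+-p_-|$.
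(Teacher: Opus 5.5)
Your proposal is correct and follows the paper's proof essentially step for step. Both use the reverse triangle inequality against the prototype operator $p_+\rho_+^c - p_-\rho_-^c$, convexity of the trace norm to obtain the $2\bar r_y^{\mathrm{tr}}$ penalties, and the exact pure-state value $\sqrt{(p_+-p_-)^2+4p_+p_-(1-F_c)}=\sqrt{1-4p_+p_-F_c}$, which you derive explicitly from the trace and trace-squared of the rank-two operator, whereas the paper only states it.
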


Proposition~\ref{prop1} shows that the distinguishability of the full class ensembles can be lower-bounded using quantities determined only by the prototypes and their associated samples. Specifically, the first term captures the separation between the two prototype states, while the second penalizes the dispersion of each class around its prototype. Thus, well-separated prototypes together with compact class-wise embeddings yield a larger lower certificate for $\Delta_{\Phi}$. The result follows Appendix~\ref{appendix:prop1_proof}. For later use, we denote the right-hand side of Eq.~\ref{eq:prototype_certificate} by the prior-weighted prototype certificate,
\begin{equation}
\mathcal{C}_{\mathrm{prior}}
=
\sqrt{1-4p_{+}p_{-}F_c}
-2\left(
p_{+}\bar r_{+}^{\mathrm{tr}}
+
p_{-}\bar r_{-}^{\mathrm{tr}}
\right).
\label{eq:c_prior}
\end{equation}

A single prototype may not adequately represent a class with multimodal structure. We thus associate each class $y\in\{+,-\}$ with a set of $K_y$ prototypes, $\mathcal{P}_y = \{\mathbf{c}_{y,1},\ldots,\mathbf{c}_{y,K_y}\}.$ Each sample \(x_i\) is assigned to a prototype of its class according to a predefined rule, with assignment \(a(i)\in\{1,\ldots,K_{y_i}\}\) fixed before optimization.

This assignment partitions each class into prototype-specific groups $I_{y,a}=\{i\in I_y:a(i)=a\}$, whose empirical proportions $\pi_{y,a}=|I_{y,a}|/m_y$ determine the contribution of each prototype state
$\rho_{y,a}^{c}=\rho_{\Phi}(\mathbf{c}_{y,a})$. The resulting prototype mixture for class $y$ is
\begin{equation}
\widetilde{\rho}_{y}^{c}
=
\sum_{a=1}^{K_y}
\pi_{y,a}\rho_{y,a}^{c}
=
\frac{1}{m_y}
\sum_{i\in I_y}
\rho_{y,a(i)}^{c}.
\label{eq:prototype_mixture}
\end{equation}

Extending Eq.~\ref{eq:radius} to the multi-prototype setting, we define the mean trace distance between the samples of class $y$ and their respective prototypes as
\begin{equation}
\bar r_{y,K_y}^{\mathrm{tr}}
=
\frac{1}{m_y}
\sum_{i\in I_y}
D_{\mathrm{tr}}
\left(
\rho_{\Phi}(\mathbf{x}_i),
\rho_{y,a(i)}^{c}
\right).
\label{eq:multi_radius}
\end{equation}
Using these definitions, we extend Proposition~\ref{prop1} to the multi-prototype setting.

\begin{proposition}[Multi Prototype-based Ensemble Certificate]
\label{prop:multi}
For the class-wise prototype sets $\mathcal{P}_y=\{\mathbf{c}_{y,1},\ldots,\mathbf{c}_{y,K_y}\}$, the class-ensemble distinguishability $\Delta_{\Phi}$ satisfies
\begin{align}
\left\|
p_{+}\bar\rho_{+}
-
p_{-}\bar\rho_{-}
\right\|_{1}
&\geq
\|p_+\tilde\rho_+^c-p_-\tilde\rho_-^c\|_1
-
2\sum_{y\in\{+,-\}}p_y\bar r_{y,K_y}^{\mathrm{tr}}.
\label{eq:multi_certificate}
\end{align}
\end{proposition}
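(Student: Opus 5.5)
The plan is to apply the reverse triangle inequality for the trace norm, using the prototype mixtures $\widetilde\rho_y^c$ of Eq.~\ref{eq:prototype_mixture} as intermediate operators. Write
\begin{equation*}
p_+\bar\rho_+ - p_-\bar\rho_- = \bigl(p_+\widetilde\rho_+^c - p_-\widetilde\rho_-^c\bigr) + p_+\bigl(\bar\rho_+ - \widetilde\rho_+^c\bigr) - p_-\bigl(\bar\rho_- - \widetilde\rho_-^c\bigr).
\end{equation*}
The triangle inequality for $\|\cdot\|_1$ then gives
\begin{equation*}
\|p_+\bar\rho_+ - p_-\bar\rho_-\|_1 \geq \|p_+\widetilde\rho_+^c - p_-\widetilde\rho_-^c\|_1 - \sum_{y\in\{+,-\}} p_y\|\bar\rho_y - \widetilde\rho_y^c\|_1 .
\end{equation*}
It therefore suffices to show that $\|\bar\rho_y - \widetilde\rho_y^c\|_1 \leq 2\,\bar r_{y,K_y}^{\mathrm{tr}}$ for each class $y$.

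For this, I use that both $\bar\rho_y$ and $\widetilde\rho_y^c$ are uniform averages over the same index set $I_y$. By Eq.~\ref{eq:prototype_mixture}, $\widetilde\rho_y^c = \frac{1}{m_y}\sum_{i\in I_y}\rho^c_{y,a(i)}$, so
\begin{equation*}
\bar\rho_y - \widetilde\rho_y^c = \frac{1}{m_y}\sum_{i\in I_y}\bigl(\rho_\Phi(\mathbf{x}_i) - \rho^c_{y,a(i)}\bigr).
\end{equation*}
Convexity (the triangle inequality) of the trace norm gives
\begin{equation*}
\|\bar\rho_y - \widetilde\rho_y^c\|_1 \leq \frac{1}{m_y}\sum_{i\in I_y}\|\rho_\Phi(\mathbf{x}_i) - \rho^c_{y,a(i)}\|_1 = \frac{1}{m_y}\sum_{i\in I_y} 2D_{\mathrm{tr}}\bigl(\rho_\Phi(\mathbf{x}_i),\rho^c_{y,a(i)}\bigr).
\end{equation*}
The right-hand side equals $2\bar r_{y,K_y}^{\mathrm{tr}}$ by Eq.~\ref{eq:multi_radius}. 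Here I use the convention $D_{\mathrm{tr}}(\rho,\sigma)=\tfrac12\|\rho-\sigma\|_1$, which matches Eq.~\ref{ensemble_diff}. Substituting into the first display yields Eq.~\ref{eq:multi_certificate}.

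The only delicate point is bookkeeping. The mixture weights $\pi_{y,a}$ must be exactly the empirical assignment proportions, so that $\widetilde\rho_y^c$ rewrites as a sum over samples and pairs each sample with its own prototype. Without that pairing one would only obtain a weaker bound. The normalization of $D_{\mathrm{tr}}$ must also be consistent, so that the factor $2$ comes out correctly. For pure states one may additionally note $D_{\mathrm{tr}}=\sqrt{1-k_\Phi}$, but this is not needed for the inequality itself. As a consistency check, setting $K_y=1$ recovers the structure of Proposition~\ref{prop1}: the leading term becomes $\|p_+\rho^c_+-p_-\rho^c_-\|_1$, which for two pure states equals $\sqrt{1-4p_+p_-F_c}$ by the standard Helstrom formula for weighted pure states.
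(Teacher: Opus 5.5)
Your proposal is correct and follows the paper's proof essentially step for step. Both use the reverse triangle inequality with the prototype mixtures $\widetilde\rho_y^c$ as intermediates, rewrite $\widetilde\rho_y^c$ as a sample-indexed average via the empirical weights $\pi_{y,a}$, and bound $\|\bar\rho_y-\widetilde\rho_y^c\|_1$ by $2\bar r_{y,K_y}^{\mathrm{tr}}$ using convexity of the trace norm.
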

The proof is provided in Appendix~\ref{appendix:prop2_proof}.

The resulting bound provides a lower certificate for $\Delta_{\Phi}$ based on the geometry of the prototype mixtures and the average distances between samples and their assigned prototypes. The prototype-mixture separation term can be viewed as a prototype-level counterpart of the pairwise geometry in Section~\ref{subsec:ensemble_diff}. Its fidelity-based expansion favors large separation between prototypes from different classes and coherence among prototypes within the same class. The radius term further encourages each sample to remain close to its assigned prototype. For multiclass datasets, Propositions~\ref{prop1} and~\ref{prop:multi} are applied to each class pair, and averaging the resulting bounds yields a lower certificate for the mean pairwise class-ensemble distinguishability. These geometric conditions motivate the practical prototype-guided objective introduced next.

\subsection{Prototype-Guided Optimization}
\label{subsec:prototype_optimization}

Motivated by Proposition~\ref{prop:multi}, we construct a practical fidelity-based objective with three components: (1) sample-to-prototype compactness $\mathcal{L}_\text{sample}$, (2) within-class prototype coherence $\mathcal{L}_\text{in}$, and (3) cross-class prototype separation $\mathcal{L}_\text{out}$. Since directly evaluating the trace-distance quantities in the certificate is generally costly on quantum hardware, we follow prior quantum embedding approaches~\citep{TQE,NQE} and use fidelity as practical surrogates for the corresponding geometry.

Let $\mathcal{Y}$ be a general class set and $y,z\in\mathcal{Y}$ be index the classes. With these notations, the three components identified before are defined as
\begin{align}
\mathcal{L}_{\mathrm{sample}}
&=
\sum_{y\in\mathcal{Y}}
\sum_{i\in I_y}
\left[
1-k_{\Phi}(\mathbf{x}_i,\mathbf{c}_{y,a(i)})
\right],
\label{eq:L_sample}
\\
\mathcal{L}_{\mathrm{in}}
&=
\sum_{y\in\mathcal{Y}}
\sum_{a,a'=1}^{K_y}
\left[
1-k_{\Phi}(\mathbf{c}_{y,a},\mathbf{c}_{y,a'})
\right],
\label{eq:L_in}
\\
\mathcal{L}_{\mathrm{out}}
&=
\sum_{y<z}
\sum_{a=1}^{K_y}
\sum_{b=1}^{K_z}
k_{\Phi}(\mathbf{c}_{y,a},\mathbf{c}_{z,b}).
\label{eq:L_out}
\end{align}
Accordingly, decreasing $\mathcal{L}_{\mathrm{in}}$ and $\mathcal{L}_{\mathrm{out}}$ promotes the prototype-level geometry underlying the certificate, while $\mathcal{L}_{\mathrm{sample}}$ provides a squared-distance surrogate for the sample-to-prototype radius penalty.

We collect the three components into the prototype-guided objective
\begin{equation}
\mathbf{\mathcal{L}}_{\mathrm{PG}}(\vartheta)
=
\left(
\mathcal{L}_{\mathrm{sample}}(\vartheta),
\mathcal{L}_{\mathrm{in}}(\vartheta),
\mathcal{L}_{\mathrm{out}}(\vartheta)
\right),
\label{eq:prototype_objective}
\end{equation}
where $\vartheta$ denotes the optimization variables that determine the quantum embedding. The formulation is agnostic to any optimization algorithm used to balance the three components. In our experiments, they are combined using a weighted sum.

Prototype assignments are determined once in the classical representation space before embedding optimization and kept fixed across candidate embeddings. After embedding, each sample is compared only with its assigned prototype so that $\mathcal{L}_{\mathrm{sample}}$ requires $B$ fidelity evaluations. For $\mathcal{L}_{\mathrm{in}}$, the within-class prototype terms require $\sum_y\binom{K_y}{2}$ fidelity evaluations, while $\mathcal{L}_{\mathrm{out}}$ requires all $\sum_{y<z}K_yK_z$ cross-class prototype pairs. Hence, the exact number of distinct fidelity evaluations is
\begin{equation}
N_{\mathrm{fid}}
=
B
+
\sum_y\binom{K_y}{2}
+
\sum_{y<z}K_yK_z = B+\binom{K_\mathrm{tot}}{2}, \quad K_{\mathrm{tot}}=\sum_{y\in \mathcal Y} K_y.
\label{eq:prototype_eval_cost}
\end{equation}
Thus, the proposed objective requires $O(B+K_{\mathrm{tot}}^2)$ fidelity evaluations. For $K_{\mathrm{tot}}$ fixed independently of $B$, this reduces to $O(B)$, compared with $O(B^2)$ for full pairwise optimization.

\subsection{Prototype RKHS Witness for Quantum Kernel Classification}
\label{subsec:rkhs_to_svm}

Beyond improving the efficiency of embedding optimization, the prototype construction also provides a direct connection between the optimized embedding geometry and downstream classification. We make this connection explicit through the RKHS~\citep{rkhs_paulsen} induced by the fidelity kernel, using the soft-margin SVM as the downstream classifier. Although the analysis below is specific to the SVM, the underlying prototype geometry is not restricted to kernel classifiers: for a fixed embedding, both kernel classifiers and variational quantum classifiers act as linear decision functions on the embedded quantum-state feature space~\citep{q_featuremap}.

We now return to the single-prototype setting and define the prototype-based RKHS classifier
\begin{equation}
    h_{\text{proto}}(\mathbf{x}) = k_{\Phi}(\mathbf{x}, \mathbf{c}_+) - k_{\Phi}(\mathbf{x}, \mathbf{c}_-).
\end{equation}
By the reproducing property, $h_{\text{proto}} \in \mathcal{H}_{\Phi}$, and its squared RKHS norm is $||h_{\text{proto}}||^2_{\mathcal{H}_{\Phi}} = 2(1-F_c) = 2D_c^2$. For the embedding $\Phi$, we consider the normalized soft-margin SVM objective~\citep{svm_obj} with resepectly identifying the class labels $+$ and $-$ with $+1$ and $-1$,
\begin{equation}
\mathcal{J}_{\mathrm{SVM}}^{\star}(\Phi)
=
\inf_{\substack{
f\in\mathcal{H}_{\Phi}\\
b\in\mathbb{R}
}}
\left\{
\frac{\lambda_{\mathrm{SVM}}}{2}
\|f\|_{\mathcal{H}_{\Phi}}^{2}
+
\frac{1}{m}
\sum_{i=1}^{m}
\left[
1-y_i\left(f(\mathbf{x}_i)+b\right)
\right]_{+}
\right\}.
\label{eq:svm_objective}
\end{equation}
where $\lambda_{\mathrm{SVM}}>0$ denotes the regularization parameter. Using $h_{\mathrm{proto}}$ as an explicit RKHS witness, we establish the following upper bounds on its optimal value.

\begin{proposition}[Prototype RKHS Witness for the SVM Objective]
\label{prop:svm_witness}
For a fixed quantum embedding $\Phi$, the optimal value of the normalized soft-margin SVM objective satisfies
\begin{align}
\mathcal{J}_{\mathrm{SVM}}^{\star}(\Phi)
&\leq
\inf_{\tau\geq 0}
\Bigg\{
\frac{\lambda_{\mathrm{SVM}}}{2}
\tau^{2}
\|h_{\mathrm{proto}}\|_{\mathcal{H}_{\Phi}}^{2}
+
\frac{1}{m}
\sum_{i=1}^{m}
\left[
1-\tau y_i h_{\mathrm{proto}}(\mathbf{x}_i)
\right]_{+}
\Bigg\}.
\label{eq:svm_witness_bound}
\end{align}
\end{proposition}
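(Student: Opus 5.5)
The plan is to restrict the infimum defining $\mathcal{J}_{\mathrm{SVM}}^{\star}(\Phi)$ in Eq.~\ref{eq:svm_objective} to a one-parameter family of explicit candidates. Since $\mathcal{J}_{\mathrm{SVM}}^{\star}(\Phi)$ is an infimum over all $(f,b)\in\mathcal{H}_\Phi\times\mathbb{R}$, evaluating the objective at any admissible pair gives an upper bound. We take $f=\tau h_{\mathrm{proto}}$ with $\tau\ge 0$ and bias $b=0$, and then optimize over $\tau$.

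First, we check admissibility of the candidate. By the reproducing property, $k_\Phi(\cdot,\mathbf{c}_\pm)\in\mathcal{H}_\Phi$. Because $\mathcal{H}_\Phi$ is a vector space, $h_{\mathrm{proto}}\in\mathcal{H}_\Phi$, and hence $\tau h_{\mathrm{proto}}\in\mathcal{H}_\Phi$ for every real $\tau$. Homogeneity of the norm gives $\|\tau h_{\mathrm{proto}}\|_{\mathcal{H}_\Phi}^2=\tau^2\|h_{\mathrm{proto}}\|_{\mathcal{H}_\Phi}^2$. For completeness, the norm can be computed as
\[
\|h_{\mathrm{proto}}\|^2=k_\Phi(\mathbf{c}_+,\mathbf{c}_+)+k_\Phi(\mathbf{c}_-,\mathbf{c}_-)-2k_\Phi(\mathbf{c}_+,\mathbf{c}_-)=2(1-F_c),
\]
using $k_\Phi(\mathbf{x},\mathbf{x})=1$ for pure states. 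The bound itself does not need this value.

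Second, we substitute $(f,b)=(\tau h_{\mathrm{proto}},0)$ into the objective. The hinge term becomes $[1-\tau y_i h_{\mathrm{proto}}(\mathbf{x}_i)]_+$, where $y_i\in\{\pm1\}$ under the stated identification. This shows that for each fixed $\tau\ge0$, $\mathcal{J}_{\mathrm{SVM}}^{\star}(\Phi)$ is at most the bracketed expression in Eq.~\ref{eq:svm_witness_bound}. The left-hand side does not depend on $\tau$, so taking the infimum over $\tau\ge 0$ preserves the inequality. This yields Eq.~\ref{eq:svm_witness_bound}.

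There is no real obstacle here; the only points needing care are bookkeeping. One is the identification of the labels $\pm$ with $\pm1$, so that $y_i h_{\mathrm{proto}}(\mathbf{x}_i)$ is meaningful. The other is noting that the restriction to $b=0$ and $\tau\ge0$ only weakens the bound and so remains valid. One could optionally remark that the infimum over $\tau$ is attained, since the function is convex and coercive in $\tau$ when $F_c<1$; this is not needed for the inequality.
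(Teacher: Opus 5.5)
Your proposal is correct and follows the paper's argument. The paper proves the multi-prototype version (Proposition~\ref{prop:multi_svm_witness}) and specializes it to $K_+=K_-=1$, but the argument is the same as yours: $(\tau h_{\mathrm{proto}},0)$ is an admissible candidate for every $\tau\ge 0$, so the objective evaluated there bounds $\mathcal{J}_{\mathrm{SVM}}^{\star}(\Phi)$, and taking the infimum over $\tau$ gives the stated inequality.
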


The inequality provides a computable upper-bound surrogate for the optimal SVM objective by the prototype-based RKHS classifier $h_\text{proto}$. Therefore, an embedding that yields a well-performing prototype classifier also provides evidence that a good SVM classifier can be constructed. Computing this bound requires evaluating each sample against both class prototypes, resulting in $O(2B)$ fidelity evaluations. Therefore, it provides a cheaper indicator of the downstream linear classification quality. The extension to multiple prototypes is provided in Appendix~\ref{appendix:multi_proto_witness}.

\section{Results}
\label{sec:results}

\begin{table}[t]
    \centering
    \begin{minipage}{0.95\textwidth}
        \centering
        \caption{Held-out QKSVM performance (MCC/AUC, 10-seed means). The ZZfeaturemap applies $ZZ$ rotations only between nearest-neighbor qubits on a rectangular lattice. Full pairs provides the $O(B^2)$ reference, while Pairwise-$B$ is the budget-matched $O(B)$ pairwise baseline. The GA$\rightarrow$NQE panel compares Pairwise-B and Prototype after NQE refinement of the GA-selected embeddings.}
        \label{tab:main_results}
    \end{minipage}
    \scriptsize
    \renewcommand{\arraystretch}{1.1}
    \begin{tabular}{llccccc}
        \toprule
        \multirow{2}{*}{Stage} & \multirow{2}{*}{Method}
        & IRIS 
        & VERTEBRAL
        & T-CELL
        & GLASS 
        & T-MNIST \\
        & & \multicolumn{5}{c}{Test MCC / Test AUC} \\

        \midrule
        Fixed
        & ZZFeatureMap
        & .6470 / .8842
        & .1981 / .6749
        & .4542 / .7686
        & .3553 / .8025
        & .0703 / .5707 \\

        \midrule

        \multirow{3}{*}{GA}
        & Full pairs    $ [O(B^2)]$
        & \textbf{.9554 / .9994}
        & .6152 / \textbf{.9234}
        & \textbf{.5548} / .8034
        & \textbf{.6057} / .8723
        & -- \\

        & Pairwise-$B$  $[O(B)]$
        & .9200 / .9966
        & \textbf{.6403} / .9125
        & .5427 / .8141
        & .5907 / .8603
        & .6357 / .8883 \\

        & Prototype     $[O(B)]$
        & .9387 / .9964
        & .5939 / .8851
        & .5502 / \textbf{.8283}
        & .5979 / \textbf{.8837}
        & \textbf{.7606 / .9437} \\

        \midrule

        \multirow{2}{*}{GA$\rightarrow$NQE}
        & Pairwise-$B$ $[O(B)]$
        & .9419 / \textbf{.9976}
        & \textbf{.6471 / .9093}
        & \textbf{.5624} / .8007
        & .5826 / .8729
        & .6423 / .8962 \\

        & \textbf{Prototype $[O(B)]$}
        & \textbf{.9574} / .9975
        & .6460 / .8978
        & .5543 / \textbf{.8325}
        & \textbf{.5919 / .8765}
        & \textbf{.7787 / .9516} \\

        \bottomrule
    \end{tabular}
\end{table}

In this section, we evaluate our prototype-guided quantum embedding method across five benchmark datasets---IRIS \citep{iris}, T-CELL \citep{tcell}, VERTEBRAL \citep{vert}, GLASS \citep{glass}, tailored-MNIST~\citep{mnist}---spanning different dimensionalities and class structures. Dataset details are provided in Appendix~\ref{appendix:data}. The quantum embeddings use 4, 6, 8, 9, and 20 qubits for IRIS, VERTEBRAL, T-CELL, GLASS, and T-MNIST, respectively. Optimization is performed using only the training set, while model selection uses the validation set. The test set is reserved exclusively for final evaluation. For each learned embedding, we construct the quantum fidelity kernel and evaluate QKSVM~\citep{SVM}. The Matthews correlation coefficient (MCC) is used as the primary evaluation metric because it is applicable to multiclass settings and robust to class imbalance. The area under the receiver operating characteristic curve (ROC--AUC) is reported as a secondary metric for the same QKSVM model selected by its MCC. Unless otherwise stated, all experiments are conducted using noiseless statevector simulation in \textit{Qiskit}~\citep{qiskit}.


\subsection{Single Prototype Optimization Results}
\label{subsec:single_prototype_results}

\begin{figure}[t]
    \centering
    \includegraphics[width=1\linewidth]{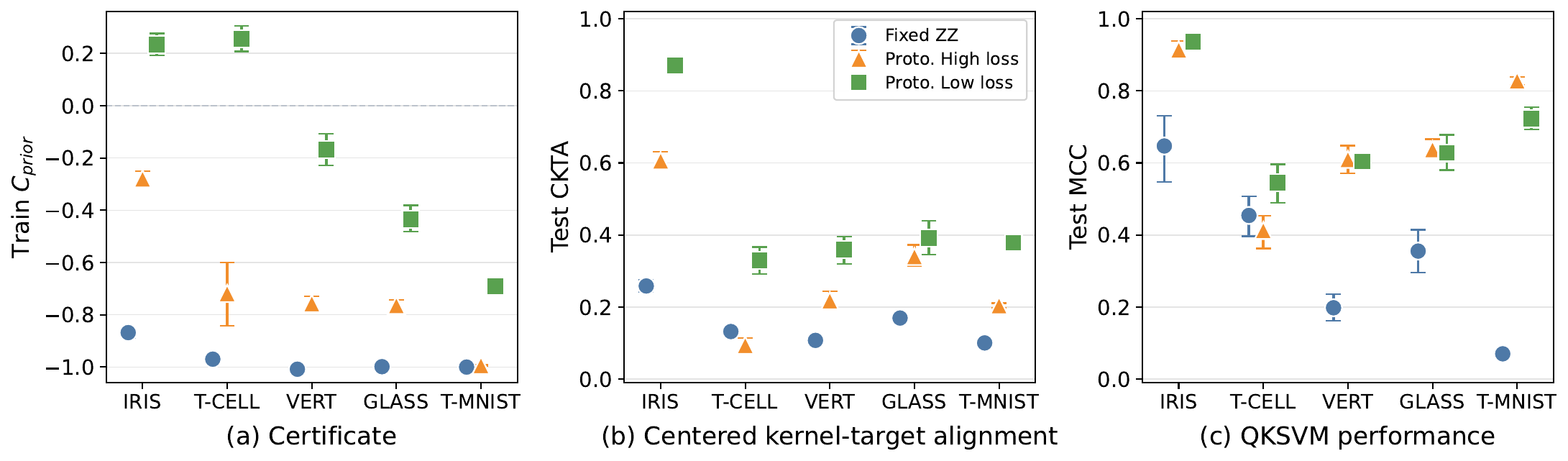}
    \begin{minipage}{1\linewidth}
        \centering
        \caption{Prototype geometry and downstream behavior. VERT means VERTEBRAL.
        For each seed, ten unique GA circuits with the lowest and highest
        training prototype losses form the low- and high-loss groups,
        respectively.
        \textbf{(a)} Training $C_{\mathrm{prior}}$, where the dashed line
        marks the threshold for a non-vacuous positive certificate.
        \textbf{(b)} Test pairwise-macro CKTA.
        \textbf{(c)} Test QKSVM MCC.
        Markers and error bars denote the mean and 95\% seed-bootstrap CI.
        }
        \label{fig:certificate}
    \end{minipage}
\end{figure}

To evaluate the efficiency and predictive performance of the proposed objective, we first use a GA~\citep{GA} to search for a data-adaptive circuit architecture. The circuit architecture is optimized progressively across increasing depths, with the best circuit found at each depth used to initialize the search at the next depth. This layer-wise strategy reuses previously optimized circuit structure rather than restarting the architecture search independently at each depth. After the GA, We further apply NQE~\citep{NQE} to the selected circuit, keeping its architecture fixed while continuously refining the data embedding. A residual neural network~\citep{residual} predicts additive corrections to the original feature values, with its parameters trained to minimize the corresponding embedding objective through gradient-based optimization. Together, these stages yield an end-to-end optimization pipeline combining discrete architecture search with continuous embedding refinement. Detailed settings are provided in Appendix~\ref{appendix:experiments_details}.

We compare the proposed method with three quantum embedding baselines in Table~\ref{tab:main_results}. The fixed ZZ feature map \citep{zzfeaturemap} serves as a standard data-independent embedding, with its circuit depth selected according to validation MCC. Full pairwise optimization, described in Section~\ref{subsec:ensemble_diff}, evaluates the fidelity objective over all training-sample pairs and serves as the $O(B^2)$ reference. We also include a budget-matched pairwise baseline (Pairwise-$B$) that randomly samples a subset of training pairs so that each objective evaluation uses the same number of fidelity evaluations as the proposed method. For a controlled comparison, all methods use the same genetic-algorithm configuration and quantum-circuit search space, differing only in their optimization objectives and fidelity-evaluation budgets. We omit full-pairwise NQE refinement due to the computational cost of differentiating through all $O(B^2)$ fidelities at each update in our experimental setting. The GA$\rightarrow$NQE comparison therefore includes only the proposed and budget-matched objectives.

Table~\ref{tab:main_results} summarizes the held-out QKSVM performance after GA search and subsequent NQE refinement. At the GA stage, the prototype-guided objective remains competitive with full-pairwise optimization while requiring only $O(B)$ fidelity evaluations, and consistently improves upon the fixed ZZ feature map. Under the same $O(B)$ fidelity budget, its advantage is particularly pronounced on the 20-qubit T-MNIST task, improving MCC from $.6357$ to $.7606$ over Pairwise-$B$. After NQE refinement, the prototype objective achieves higher MCC than Pairwise-$B$ on IRIS, GLASS, and T-MNIST, with the largest gap again observed on T-MNIST ($.7787$ vs. $.6423$). 

Further analysis of the final T-MNIST embeddings shows that the prototype-guided method yields greater pairwise class-ensemble
distinguishability and larger RKHS-normalized QKSVM margins for most test samples than Pairwise-$B$, providing geometric evidence
consistent with its performance advantage (Appendix~\ref{app:t_mnist_good}). Overall, these results demonstrate that prototype-guided optimization can substantially reduce fidelity-evaluation cost while retaining competitive QKSVM performance, although its relative benefit remains dataset dependent.

\subsection{Empirical Behavior of the Certificate and RKHS Witness}
\label{subsec:certificate_and_rkhs}

To examine whether the geometry promoted by the prototype objective transfers beyond its training loss, we compare ZZ feature map with ten circuits having the lowest and highest training prototype losses during GA optimization. The groups are constructed using only the train objective. We evaluate $C_{\mathrm{prior}}$ on the train set, CKTA on the test set, and the test MCC of the corresponding QKSVM.

As shown in Figure~\ref{fig:certificate}(a)--(b), the low-loss circuits consistently yield the largest $C_{\mathrm{prior}}$ and CKTA across all datasets. They also achieve higher mean MCC than the ZZ feature map, with the clearest separation on IRIS and T-CELL, where $C_{\mathrm{prior}}$ becomes positive. On the remaining datasets, the certificate remains negative and therefore provides no improvement over the trivial zero lower bound.

Higher CKTA does not, however, translate monotonically into MCC, particularly on T-MNIST. While CKTA summarizes global Gram-matrix alignment, QKSVM performance also depends on margins near the decision boundary. Relative to the high-loss circuits, the low-loss circuits improve RKHS-normalized geometric test margins for most samples but worsen them for a majority of samples in the hardest difficulty decile (Appendix~\ref{app:t_mnist_bad}). This pattern suggests that the mean-based objective can favor easier samples at the expense of difficult boundary samples, illustrating why improved average embedding geometry need not yield better QKSVM performance.

We next test whether the prototype construction can be used for validation candidate selection. The 20 prototype-optimized circuits in the bank of GA is reranked using the validation prototype loss (P0), the exact prototype RKHS classifier in Proposition~\ref{prop:svm_witness} (P1) and a full validation QKSVM (P2).

\begin{table*}[t]
\centering
\begin{minipage}{0.84\textwidth}
\centering
\caption{Held-out MCC/AUC after validation reranking of the same dataset-specific bank of $20$ prototype-optimized circuits (10-seed means). Here, $N$ and $V$ denote the numbers of
training and validation samples, respectively, and $C$ denotes the number of classes, with one prototype per class. P0 selects using validation prototype loss, P1 using the exact prototype RKHS classifier in Proposition~\ref{prop:svm_witness}, and P2 using a full validation QKSVM. $\Delta$MCC denotes P1--P0, and the leading fidelity cost of each selector is shown in the parentheses, the exact cost is in Appendix~\ref{app:selection}.}
\label{tab:table2}
\end{minipage}
\setlength{\tabcolsep}{2.4pt}
\setlength{\tabcolsep}{4.0pt}
\begin{tabular}{l cc c c}
\toprule
&
\multicolumn{2}{c}{Linear-cost selectors}
&
&
\multicolumn{1}{c}{Full-QKSVM reference}
\\
\cmidrule(lr){2-3}
\cmidrule(lr){5-5}
Dataset
&
\shortstack{P0 $(V)$}
&
\shortstack{P1 $(CV)$}
&
$\Delta$ MCC
&
\shortstack{P2 $(N^2)$}
\\
\midrule

IRIS
& $.9387/.9964$
& $.9387/.9967$
& $+.0000$
& $.9387/.9976$
\\

VERTEBRAL
& $.5939/.8851$
& $.6189/.8920$
& $\mathbf{+.0250}$
& $.6201/.8946$
\\

T-CELL
& $.5502/.8283$
& $.5502/.8283$
& $+.0000$
& $.5502/.8283$
\\

GLASS
& $.5979/.8837$
& $.6073/.8800$
& $\mathbf{+.0094}$
& $.6001/.8865$
\\      

T-MNIST
& $.7606/.9437$
& $.7557/.9427$
& $-0.0049$
& $.7657/.9446$
\\

\bottomrule
\end{tabular}
\end{table*}

Table~\ref{tab:table2} shows that P1 improves or preserves the mean test MCC relative to P0 on four of five datasets, and matches or exceeds P2 on three. These results do not identify the prototype classifier with the trained QKSVM. They rather show that it provides a classifier-aware RKHS witness that can serve as a useful proxy. When the additional validation cost is affordable, prototype classifier offers a more informative selector than the prototype loss without requiring the quadratic train kernel.

\subsection{Multi Prototype Opimization Results}
\label{sec:multi_result}

A single prototype may be insufficient when a class contains multiple distinct modes. We therefore evaluate the multi-prototype extension of Proposition~\ref{prop:multi} on GLASS. To alleviate its class sparsity, the original glass types are grouped into three broader classes: float-processed window glass, non-float-processed window glass, and non-window glass. These three classes contain two, one, and three original sublabels, respectively. We therefore use $K=(2,1,3)$, where each entry denotes the number of prototypes for the corresponding class in the order listed above. One prototype is constructed per subclass, and each sample remains assigned to its subclass prototype throughout optimization. Thus, the experiment directly tests whether representing distinct within-class modes by separate prototypes improves the geometry targeted by the multi-prototype objective.

As a result, the multi-prototype embedding increases the test MCC from $0.592$ to $0.647$, while the AUC remains similar ($0.876$ to $0.884$). Figure~\ref{fig:multi_k}(a) compares the resulting QKSVM with the single-prototype model. The out-of-fold (OOF) margin distribution also shifts toward larger values overall, although the multi-prototype model retains a somewhat longer negative tail. This indicates that the improvement is not explained by uniformly increasing every sample margin. Rather, the multi-prototype representation improves the margin geometry for a substantial portion of the samples while a smaller set of difficult boundary samples remains.

To examine the mechanism more directly, Figure~\ref{fig:multi_k}(b) relates the sample-wise trace distance reduction to the assigned prototype, $r_{K=1}-r_{\mathrm{multi}}$, to the corresponding OOF margin gain, $m_{\mathrm{multi}}-m_{K=1}$. The two quantities exhibit a positive association ($\rho_s=0.33$) that implies samples whose local prototype radius is reduced more tend to obtain larger classification margins. Although the association is moderate rather than deterministic, its direction is consistent across the quantile bins and supports the mechanism underlying Proposition~\ref{prop:multi}. These results provide empirical evidence that multiple prototypes can improve local class coverage and translate the reduction in sample-to-prototype radius into a more favorable QKSVM decision geometry.

\begin{figure}[t]
    \centering
    \includegraphics[width=.85\linewidth]{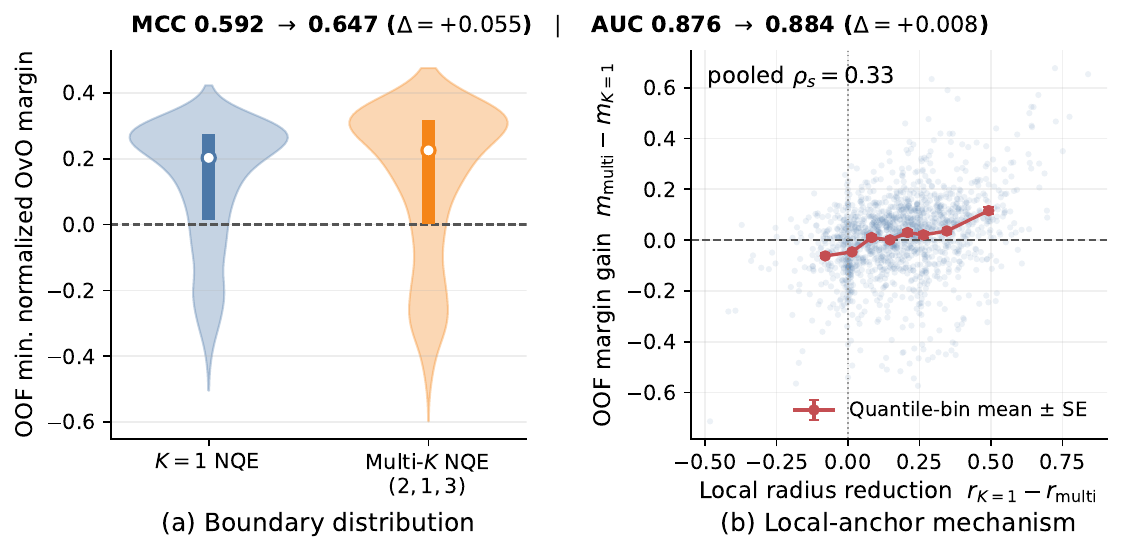}
    \begin{minipage}{0.85\linewidth}
        \centering
        \caption{Multi-prototype optimization on GLASS.
        \textbf{(a)} Vilolin plots of OOF minimum normalized one-vs-one margin distributions for the single-prototype ($K=1$) and multi-prototype ($K=(2,1,3)$) model, where the entries of $K$ denote the class-specific numbers of prototypes. The white circle and the bar indicate the median and interquartile range (IQR) of each distribution, respectively.
        \textbf{(b)} OOF margin gain versus the reduction in sample-to-assigned-prototype trace-distance radius. Quantile-bin means $\pm$ standard errors are shown. $\rho_s$ means the pooled Spearman correlation.
        }
        \label{fig:multi_k}
    \end{minipage}
\end{figure}

\section{Summary and Discussion}
\label{sec:summary}

We introduced a prototype-guided framework for efficient data-adaptive quantum data embedding optimization. By replacing dense sample-pair comparisons with prototype separation and sample-to-prototype compactness, the proposed method provides a measurable lower bound for class-ensemble distinguishability and an explicit RKHS witness for the soft-margin SVM objective. This reduces the fidelity-evaluation cost from $\mathcal{O}(B^2)$ to $\mathcal{O}(B)$ for a constant number of prototypes. 

Across the benchmark tasks, the resulting embeddings remained competitive with full-pairwise optimization while consistently improving upon the fixed feature map, showing that effective data-adaptive quantum embeddings do not necessarily require exhaustive pairwise evaluations.

The empirical analyses clarify the scope of this compression. Lower prototype loss consistently produced more favorable training certificates and held-out kernel alignment, but these improvements did not translate monotonically into QKSVM performance. This reflects the fact that $C_{\mathrm{prior}}$ is a lower bound for embedding-level class distinguishability rather than a direct characterization of the final SVM decision boundary. The prototype RKHS classifier provides a closer bridge to downstream classification: using it for validation improved or preserved MCC-based circuit selection on four of five datasets while retaining linear evaluation cost. We therefore interpret the certificate as a geometric diagnostic and the RKHS construction as a computationally inexpensive witness of classifier quality, rather than as exact predictors of the trained QKSVM.

The multi-prototype experiment further shows that multiple representatives can better capture multimodal class structure, improving GLASS MCC while reducing local sample-to-prototype distances. However, the gains remain dataset-dependent, and difficult boundary points are not fully explained by prototype geometry alone. More generally, the proposed fidelity objective is a surrogate for the trace-distance certificate, and its effectiveness depends on prototype quality and class structure. 
The present multi-prototype study also relies on known subclass information, leaving automatic prototype construction, adaptive assignment, and evaluation under finite-shot noisy hardware as important directions for future work.

\subsubsection*{AI Use Disclosure}
In this work, we used generative AI tools to create or edit software code and make up draft parts of a research paper. We also used the AI tools to summarize or analyze existing literature and edit a research paper to improve readability.

\subsubsection*{Ethics Statement}

This work focuses on the theoretical and methodological development of quantum data embeddings and their evaluation on existing benchmark datasets. We did not collect new personal data or conduct new experiments involving human participants. Dataset sources and preprocessing procedures are described in Appendix~\ref{appendix:data}. The biomedical benchmarks are used to evaluate classification performance, and the results do not establish clinical validity.

\subsubsection*{Reproducibility Statement}

To support reproducibility, we provide derivations and proofs of the theoretical results in Appendices~\ref{app:proof} and~\ref{appendix:multi_proto_witness}. Appendix~\ref{appendix:data} describes the datasets, label construction, preprocessing, and data split. Appendix~\ref{appendix:experiments_details} details the genetic algorithm search and neural quantum embedding refinement. Appendix~\ref{app:selection} specifies the validation-based circuit selection rules, including the prototype RKHS classifier, and their fidelity-evaluation costs. Appendix~\ref{app:tmnist} describes the additional margin analyses. All quantum simulations use noiseless statevector simulation, as specified in the experimental setup. 

\subsubsection*{Acknowledgments}
This work was supported by the Institute of Information \& Communications Technology Planning \& Evaluation (IITP) ITRC (Information Technology Research Center) funded by the the Ministry of Science and ICT (MSIT), Korea (IITP-2026-RS-2026-25519864), by the IITP grant (No. 2019-0-00003, Research and Development of Core Technologies for Programming, Running, Implementing and Validating of Fault-Tolerant Quantum Computing System), by the National Research Foundation of Korea (RS-2025-02309510), by Korean ARPA-H Project through the Korea Health Industry Development Institute (KHIDI), funded by the Ministry of Health \& Welfare, Korea (RS-2025-25456722), and National Institute of Health (NIH) research project (2026-ER0901-00). We thank Boseong Kim for stimulating discussions.


\appendix

\section*{Appendix}

\section{Proofs}
\label{app:proof}
\subsection{Derivation of the Hilbert--Schmidt Relaxation (Equation~\ref{eq:pairwise_ensemble_lower_bound})}
\label{app:hs_relaxation}

For a class ensemble distinguishability 
\begin{equation*}
\label{ensemble_diff}
    \Delta_{\Phi}
    =
    \left\|
    p_+\bar{\rho}_+
    -
    p_-\bar{\rho}_-
    \right\|_1 = 2D_{\mathrm{tr}}(p_+\bar{\rho}_+, p_-\bar{\rho}_-),
\end{equation*}
it admits the following lower bound
\begin{equation}
    \Delta_{\Phi}
    \geq
    \frac{1}{m}
    \Bigg[
    \underbrace{
    \sum_{y\in\{+,-\}}
    \sum_{i,j\in I_y}
    k_{\Phi}(\mathbf{x}_i,\mathbf{x}_j)
    }_{\text{within-class purity terms}}
    -
    \underbrace{
    2
    \sum_{i \in I_+}
    \sum_{j \in I_-}
    k_{\Phi}(\mathbf{x}_i,\mathbf{x}_j)
    }_{\text{cross-class overlap term}}
    \Bigg]^{1/2}.
    \label{eq:app_ensemble}
\end{equation}

\begin{proof}  
Let
\[
A_{\Phi}
=
p_{+}\bar{\rho}_{+}
-
p_{-}\bar{\rho}_{-}.
\]
Since the Schatten norms satisfy
$\|A\|_{1}\ge \|A\|_{2}$ for any operator $A$, we have
\[
\Delta_{\Phi}
=
\|A_{\Phi}\|_{1}
\ge
\|A_{\Phi}\|_{2} = \|A_{\Phi}\|_{\mathrm{HS}},
\]
where
\[
\|A_{\Phi}\|_{\mathrm{HS}}
=
\sqrt{\operatorname{Tr}(A_{\Phi}^{\dagger}A_{\Phi})}.
\]
Because $A_{\Phi}$ is Hermitian,
\[
\|A_{\Phi}\|_{\mathrm{HS}}^{2}
=
\operatorname{Tr}(A_{\Phi}^{2}).
\]
Expanding the right-hand side gives
\[
\begin{aligned}
\operatorname{Tr}(A_{\Phi}^{2})
={}&
p_{+}^{2}\operatorname{Tr}(\bar{\rho}_{+}^{2})
+
p_{-}^{2}\operatorname{Tr}(\bar{\rho}_{-}^{2}) 
-
2p_{+}p_{-}
\operatorname{Tr}(\bar{\rho}_{+}\bar{\rho}_{-}).
\end{aligned}
\]

Using
\[
\bar{\rho}_{y}
=
\frac{1}{m_y}
\sum_{i\in I_y}\rho_{\Phi}(\mathbf{x}_i),
\qquad
p_y=\frac{m_y}{m},
\]
the class-purity terms can be written as
\[
\begin{aligned}
p_y^2\operatorname{Tr}(\bar{\rho}_y^2)
&=
\frac{p_y^2}{m_y^2}
\sum_{i,j\in I_y}
\operatorname{Tr}
\left[
\rho_{\Phi}(\mathbf{x}_i)\rho_{\Phi}(\mathbf{x}_j)
\right] \\
&=
\frac{1}{m^2}
\sum_{i,j\in I_y}
\operatorname{Tr}
\left[
\rho_{\Phi}(\mathbf{x}_i)\rho_{\Phi}(\mathbf{x}_j)
\right].
\end{aligned}
\]
Likewise, the cross-class term satisfies
\[
\begin{aligned}
p_{+}p_{-}
\operatorname{Tr}(\bar{\rho}_{+}\bar{\rho}_{-})
&=
\frac{p_{+}p_{-}}{m_{+}m_{-}}
\sum_{i\in I_{+}}
\sum_{j\in I_{-}}
\operatorname{Tr}
\left[
\rho_{\Phi}(\mathbf{x}_i)\rho_{\Phi}(\mathbf{x}_j)
\right] \\
&=
\frac{1}{m^2}
\sum_{i\in I_{+}}
\sum_{j\in I_{-}}
\operatorname{Tr}
\left[
\rho_{\Phi}(\mathbf{x}_i)\rho_{\Phi}(\mathbf{x}_j)
\right].
\end{aligned}
\]

Combining these expressions yields
\[
\Delta_{\Phi}
\ge
\frac{1}{m}
\left[
\sum_{y\in\{+,-\}}
\sum_{i,j\in I_y}
\operatorname{Tr}
\left[
\rho_{\Phi}(\mathbf{x}_i)\rho_{\Phi}(\mathbf{x}_j)
\right]
-
2
\sum_{i\in I_{+}}
\sum_{j\in I_{-}}
\operatorname{Tr}
\left[
\rho_{\Phi}(\mathbf{x}_i)\rho_{\Phi}(\mathbf{x}_j)
\right]
\right]^{1/2}.
\]

This derivation holds for arbitrary density operators
$\rho_{\Phi}(\mathbf{x}_i)$, including mixed states. For the pure-state
embeddings considered in this work,
\[
\rho_{\Phi}(\mathbf{x}_i)
=
|\psi_{\Phi}(\mathbf{x}_i)\rangle
\langle\psi_{\Phi}(\mathbf{x}_i)|,
\]
and therefore
\[
\operatorname{Tr}
\left[
\rho_{\Phi}(\mathbf{x}_i)\rho_{\Phi}(\mathbf{x}_j)
\right]
=
\left|
\langle
\psi_{\Phi}(\mathbf{x}_i)
\mid
\psi_{\Phi}(\mathbf{x}_j)
\rangle
\right|^2
=
k_{\Phi}(\mathbf{x}_i,\mathbf{x}_j),
\]
which gives Eq.~\ref{eq:app_ensemble}.
\end{proof}
The within-class sums include the diagonal terms $i=j$. For normalized pure-state embeddings, these terms satisfy $k_\Phi(\mathbf{x}_i,\mathbf{x}_i)=1$ and require no quantum evaluation. Together with the symmetry of the fidelity kernel, this leaves $B(B-1)/2$ distinct off-diagonal fidelity evaluations for $B$ samples.

\subsection{Proposition~\ref{prop1}}
\label{appendix:prop1_proof}
For arbitrary class prototypes $\mathbf{c}_{+}$ and $\mathbf{c}_{-}$, class-ensemble distinguishability $\Delta_{\Phi}$ satisfies
\begin{equation*}
\left\lVert p_{+}\bar{\rho}_{+} - p_{-}\bar{\rho}_{-} \right\rVert_1
\geq
\sqrt{1-4p_{+}p_{-}F_c} - 2 \left( p_{+}\bar r_{+}^{\mathrm{tr}} + p_{-}\bar r_{-}^{\mathrm{tr}} \right).
\end{equation*}
\begin{proof}
By the reverse triangle inequality,
\begin{align*}
\left\lVert p_{+}\bar{\rho}_{+} - p_{-}\bar{\rho}_{-} \right\rVert_{1}
&\geq
\left\lVert p_{+}\rho_{+}^{c} - p_{-}\rho_{-}^{c} \right\rVert_{1}
- p_{+} \left\lVert \bar{\rho}_{+}
- \rho_{+}^{c} \right\rVert_{1}
- p_{-} \left\lVert \bar{\rho}_{-}
- \rho_{-}^{c} \right\rVert_{1}.
\end{align*}
The convexity of the trace norm gives
\begin{equation*}
\left\lVert \bar{\rho}_{y} - \rho_{y}^{c} \right\rVert_{1}
\leq
\frac{1}{m_y} \sum_{i\in I_y} \left\lVert \rho_{\Phi}(\mathbf{x}_i)
- \rho_{y}^{c} \right\rVert_{1}
= 2\bar r_{y}^{\mathrm{tr}}.
\end{equation*}
Moreover, for the two pure prototype states,
\begin{equation*}
\left\lVert p_{+}\rho_{+}^{c} - p_{-}\rho_{-}^{c} \right\rVert_{1}
=\sqrt{
(p_{+}-p_{-})^{2}
+
4p_{+}p_{-}(1-F_c)
} 
= \sqrt{1-4p_{+}p_{-}F_c}.
\end{equation*}
\end{proof}

\subsection{Proposition~\ref{prop:multi}}
\label{appendix:prop2_proof}
For the class-wise prototype sets $\mathcal{P}_y=\{\mathbf{c}_{y,1},\ldots,\mathbf{c}_{y,K_y}\}$, the class-ensemble distinguishability $\Delta_{\Phi}$ satisfies
\begin{align*}
\left\|
p_{+}\bar\rho_{+}
-
p_{-}\bar\rho_{-}
\right\|_{1}
&\geq
\|p_+\tilde\rho_+^c-p_-\tilde\rho_-^c\|_1
-
2\sum_{y\in\{+,-\}}p_y\bar r_{y,K_y}^{\mathrm{tr}}.
\label{eq:multi_certificate}
\end{align*}
In the balanced case,
$$
\left\|
\frac{1}{2}\bar\rho_{+}
-
\frac{1}{2}\bar\rho_{-}
\right\|_{1}
\geq
\|\frac{1}{2}\tilde\rho_+^c-\frac{1}{2}\tilde\rho_-^c\|_1
- (\bar{r}^{\mathrm{tr}}_{+,K_+} + \bar{r}^{\mathrm{tr}}_{-,K_-})
$$
\begin{proof}
By reverse triangle inequality,
$$
\|p_+\bar\rho_+-p_-\bar\rho_-\|_1
\ge
\|p_+\tilde\rho_+^c-p_-\tilde\rho_-^c\|_1
-
p_+\|\bar\rho_+-\tilde\rho_+^c\|_1
-
p_-\|\bar\rho_- -\tilde\rho_-^c\|_1.
$$
The induced-weight identity gives
$$
\tilde\rho_y^c
=
\sum_{a=1}^{K_y}
\pi_{y,a}\rho_{y,a}^{c}
=
\frac1{m_y}\sum_{a=1}^{K_y}|I_{y,a}|\rho^c_{y,a}
=
\frac1{m_y}\sum_{i\in I_y}\rho_{y,a(i)}^c.
$$
Therefore,
$$
\bar\rho_y-\tilde\rho_y^c
=
\frac1{m_y}\sum_{i\in I_y}
\left(\rho_\Phi(\mathbf{x}_i)-\rho_{y,a(i)}^c\right).
$$
Using convexity,
\begin{align*}
\|\bar\rho_y-\tilde\rho_y^c\|_1
&{}= \left\| \frac1{m_y}\sum_{i\in I_y}
\left(\rho_\Phi(\mathbf{x}_i)-\rho_{y,a(i)}^c\right)  \right\|_1 \\ 
&\le
\frac1{m_y}\sum_{i\in I_y}\|\rho_\Phi(\mathbf{x}_i)-\rho_{y,a(i)}^c\|_1 \\
&=
2\bar r_{y,K_y}^{\mathrm{tr}}.
\end{align*}
\end{proof}

\paragraph{Connection to the prototype-guided objective.}
Applying the Hilbert--Schmidt relaxation from Appendix~\ref{app:hs_relaxation} to the prototype-mixture separation gives
\[
\|p_+\tilde{\rho}_+^c-p_-\tilde{\rho}_-^c\|_1
\ge
\|p_+\tilde{\rho}_+^c-p_-\tilde{\rho}_-^c\|_{\mathrm{HS}},
\]
where
\begin{align*}
\|p_+\tilde{\rho}_+^c-p_-\tilde{\rho}_-^c\|_{\mathrm{HS}}^2
&\quad =
\sum_{y\in\{+,-\}}p_y^2
\sum_{a,a'=1}^{K_y}
\pi_{y,a}\pi_{y,a'}
k_\Phi(\mathbf{c}_{y,a},\mathbf{c}_{y,a'})\\
&\qquad -
2p_+p_-
\sum_{a=1}^{K_+}\sum_{b=1}^{K_-}
\pi_{+,a}\pi_{-,b}
k_\Phi(\mathbf{c}_{+,a},\mathbf{c}_{-,b}).
\end{align*}
This expansion motivates high within-class prototype fidelity and low cross-class prototype fidelity. For pure-state embeddings,
$1-k_\Phi(\mathbf{x}_i,\mathbf{c}_{y,a(i)}) =D_{\mathrm{tr}}(\rho_\Phi(\mathbf{x}_i),\rho_{y,a(i)}^c)^2$,
which provides a squared-distance surrogate for the radius terms. Thus, $L_{\mathrm{sample}}$, $L_{\mathrm{in}}$, and $L_{\mathrm{out}}$ serve as unweighted geometric surrogates for the corresponding certificate terms.

\section{Multiple-Prototype RKHS Witness for the SVM Objective}
\label{appendix:multi_proto_witness}

We extend Proposition~\ref{prop:svm_witness} to the multiple-prototype setting introduced in Proposition~\ref{prop:multi}. For each class $y\in\{+,-\}$, define the prototype mean embedding in the RKHS $\mathcal{H}_{\Phi}$ as
\[
\mu_y
=
\sum_{a=1}^{K_y}
\pi_{y,a}
k_{\Phi}(\cdot,\mathbf{c}_{y,a}).
\]
The corresponding multiple-prototype RKHS classifier is
\begin{align}
h_{\mathrm{multi}}(\mathbf{x})
&=
\mu_{+}(\mathbf{x})-\mu_{-}(\mathbf{x})
\nonumber\\
&=
\sum_{a=1}^{K_+}
\pi_{+,a}k_{\Phi}(\mathbf{x},\mathbf{c}_{+,a})
-
\sum_{b=1}^{K_-}
\pi_{-,b}k_{\Phi}(\mathbf{x},\mathbf{c}_{-,b}).
\label{eq:multi_proto_classifier}
\end{align}
Thus, $h_{\mathrm{multi}}$ compares the similarity of a
sample to the empirical prototype mixtures of the two classes,
rather than to a single prototype from each class.

By the reproducing property,
$\mu_{+},\mu_{-}\in\mathcal{H}_{\Phi}$ and hence
$h_{\mathrm{multi}}\in\mathcal{H}_{\Phi}$.
Its squared RKHS norm is
\begin{align*}
\|h_{\mathrm{multi}}\|_{\mathcal{H}_{\Phi}}^2
&=
\|\mu_{+}-\mu_{-}\|_{\mathcal{H}_{\Phi}}^2
\nonumber\\
&=
\|\mu_{+}\|_{\mathcal{H}_{\Phi}}^2
+
\|\mu_{-}\|_{\mathcal{H}_{\Phi}}^2
-
2
\langle
\mu_{+},\mu_{-}
\rangle_{\mathcal{H}_{\Phi}}.
\end{align*}
Applying the reproducing property to each term gives
\begin{align}
\|h_{\mathrm{multi}}\|_{\mathcal{H}_{\Phi}}^2
={}&
\sum_{a,a'=1}^{K_+}
\pi_{+,a}\pi_{+,a'}
k_{\Phi}(\mathbf{c}_{+,a},\mathbf{c}_{+,a'})
\nonumber\\
&+
\sum_{b,b'=1}^{K_-}
\pi_{-,b}\pi_{-,b'}
k_{\Phi}(\mathbf{c}_{-,b},\mathbf{c}_{-,b'})
\nonumber\\
&-
2
\sum_{a=1}^{K_+}
\sum_{b=1}^{K_-}
\pi_{+,a}\pi_{-,b}
k_{\Phi}(\mathbf{c}_{+,a},\mathbf{c}_{-,b}).
\label{eq:multi_proto_norm}
\end{align}

This construction leads directly to the multiple-prototype
analogue of Proposition~\ref{prop:svm_witness}.

\begin{proposition}[Multiple-Prototype RKHS Witness]
\label{prop:multi_svm_witness}
For a fixed quantum embedding $\Phi$, let
$h_{\mathrm{multi}}$ be defined by
Eq.~\ref{eq:multi_proto_classifier}.
Then the optimal normalized soft-margin SVM objective satisfies
\begin{align}
\mathcal{J}_{\mathrm{SVM}}^{\star}(\Phi)
\leq
\inf_{\tau\geq0}
\Bigg\{
&
\frac{\lambda_{\mathrm{SVM}}}{2}
\tau^2
\|h_{\mathrm{multi}}\|_{\mathcal{H}_{\Phi}}^2
+
\frac{1}{m}
\sum_{i=1}^{m}
\left[
1-\tau y_i h_{\mathrm{multi}}(\mathbf{x}_i)
\right]_{+}
\Bigg\}.
\label{eq:multi_svm_witness}
\end{align}
\end{proposition}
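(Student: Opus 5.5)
The plan is to treat the infimum in Eq.~\ref{eq:svm_objective} as an infimum over a feasible set and to exhibit an explicit family of feasible points. The first step is to check that $h_{\mathrm{multi}}\in\mathcal{H}_{\Phi}$; this was established above from the reproducing property, since $h_{\mathrm{multi}}$ is a finite linear combination of the kernel sections $k_{\Phi}(\cdot,\mathbf{c}_{y,a})$. Because $\mathcal{H}_{\Phi}$ is a vector space, for every $\tau\ge 0$ the scaled function $f_\tau=\tau h_{\mathrm{multi}}$ also lies in $\mathcal{H}_{\Phi}$. Together with the bias $b=0$, the pair $(f_\tau,0)$ is admissible in the SVM problem.

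Next, evaluate the SVM objective at $(f_\tau,0)$. Homogeneity of the norm gives $\|f_\tau\|_{\mathcal{H}_{\Phi}}^2=\tau^2\|h_{\mathrm{multi}}\|_{\mathcal{H}_{\Phi}}^2$. By definition, $f_\tau(\mathbf{x}_i)=\tau h_{\mathrm{multi}}(\mathbf{x}_i)$. The objective value is therefore exactly the expression inside the braces of Eq.~\ref{eq:multi_svm_witness}.

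Since $\mathcal{J}_{\mathrm{SVM}}^{\star}(\Phi)$ is an infimum over all $(f,b)\in\mathcal{H}_{\Phi}\times\mathbb{R}$, it is bounded above by the objective at each admissible point. So $\mathcal{J}_{\mathrm{SVM}}^{\star}(\Phi)$ is at most the bracketed quantity for every $\tau\ge0$, and taking the infimum over $\tau\ge 0$ yields the claim. The argument is identical to that for Proposition~\ref{prop:svm_witness}. Indeed, for $K_+=K_-=1$ we have $h_{\mathrm{multi}}=h_{\mathrm{proto}}$, so Proposition~\ref{prop:svm_witness} is the special case.

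There is no real obstacle. The only points needing care are bookkeeping. First, the norm in the bound is the RKHS norm of $h_{\mathrm{multi}}$ itself, which is computable via Eq.~\ref{eq:multi_proto_norm}; no representer-theorem argument about the training points is needed. Second, restricting to $b=0$ and to the ray $\{\tau h_{\mathrm{multi}}\}$ only shrinks the feasible set, so the inequality goes in the correct direction.
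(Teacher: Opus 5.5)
Your proposal is correct and follows the paper's argument: since $\tau h_{\mathrm{multi}}\in\mathcal{H}_{\Phi}$ for every $\tau\geq 0$, the pair $(\tau h_{\mathrm{multi}},0)$ is admissible, so $\mathcal{J}_{\mathrm{SVM}}^{\star}(\Phi)$ is bounded by the bracketed expression for each $\tau$, and taking the infimum over $\tau$ gives the result. Your remarks on homogeneity of the norm and on recovering Proposition~\ref{prop:svm_witness} when $K_+=K_-=1$ also match the paper.
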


\begin{proof}
By definition, the optimal normalized soft-margin SVM
objective is
\begin{equation}
\mathcal{J}_{\mathrm{SVM}}^{\star}(\Phi)
=
\inf_{\substack{
f\in\mathcal{H}_{\Phi}\\
b\in\mathbb{R}
}}
\left\{
\frac{\lambda_{\mathrm{SVM}}}{2}
\|f\|_{\mathcal{H}_{\Phi}}^2
+
\frac{1}{m}
\sum_{i=1}^{m}
\left[
1-y_i(f(\mathbf{x}_i)+b)
\right]_{+}
\right\}.
\nonumber
\end{equation}

Since $h_{\mathrm{multi}}\in\mathcal{H}_{\Phi}$,
for every $\tau\geq0$ the scaled function
\[
f_{\tau}
=
\tau h_{\mathrm{multi}}
\]
also belongs to $\mathcal{H}_{\Phi}$.
Therefore,
$(f_{\tau},0)$ is an admissible candidate in the SVM
optimization problem, and hence
\begin{align*}
\mathcal{J}_{\mathrm{SVM}}^{\star}(\Phi)
&\leq
\frac{\lambda_{\mathrm{SVM}}}{2}
\tau^2
\|h_{\mathrm{multi}}\|_{\mathcal{H}_{\Phi}}^2
+
\frac{1}{m}
\sum_{i=1}^{m}
\left[
1-\tau y_i h_{\mathrm{multi}}(\mathbf{x}_i)
\right]_{+}
\end{align*}
for every $\tau\geq0$.
Taking the infimum over $\tau\geq0$ yields
Eq.~\ref{eq:multi_svm_witness}.
\end{proof}

Evaluating $h_{\mathrm{multi}}(\mathbf{x}_i)$ requires the fidelity
between $x_i$ and every prototype from both classes.
For a sample of size $B$, the sample-dependent cost is therefore
\[
B(K_{+}+K_{-})
\]
fidelity evaluations.
The RKHS norm in Eq.~\ref{eq:multi_proto_norm} can be
computed using
\[
P_K
=
\binom{K_+}{2}
+
\binom{K_-}{2}
+
K_+K_-
\]
distinct prototype-to-prototype fidelities, since the diagonal
terms satisfy $k_{\Phi}(\mathbf{c}_{y,a},\mathbf{c}_{y,a})=1$.
Thus, the total number of distinct fidelity evaluations required
for the multiple-prototype witness is
\begin{equation*}
N_{\mathrm{fid}}^{\mathrm{multi\text{-}witness}}
=
B(K_{+}+K_{-})
+
\binom{K_+}{2}
+
\binom{K_-}{2}
+
K_+K_- = O(BK_\mathrm{tot} + K_\mathrm{tot}^2).
\label{eq:multi_witness_cost}
\end{equation*}
It remains linear in $B$ for a fixed number of
prototypes.

Finally, when $K_+=K_-=1$, the prototype weights are unity and
\[
\mu_{+}=k_{\Phi}(\cdot,\mathbf{c}_{+}),
\qquad
\mu_{-}=k_{\Phi}(\cdot,\mathbf{c}_{-}),
\]
so that
\[
h_{\mathrm{multi}}
=
k_{\Phi}(\cdot,\mathbf{c}_{+})
-
k_{\Phi}(\cdot,\mathbf{c}_{-})
=
h_{\mathrm{proto}}.
\]
Moreover,
\[
\|h_{\mathrm{multi}}\|_{\mathcal{H}_{\Phi}}^2
=
2
-
2k_{\Phi}(\mathbf{c}_{+},\mathbf{c}_{-})
=
2(1-F_c).
\]
Hence Proposition~\ref{prop:multi_svm_witness} reduces exactly
to Proposition~\ref{prop:svm_witness}.

\section{Dataset Details}
\label{appendix:data}

This section summarizes the datasets, label construction, preprocessing, and Train/validation/Test partitions used in our experiments.

\begin{table}[t]
\centering
\begin{minipage}{0.84\textwidth}
    \centering
    \caption{
    Summary of the datasets used in the experiments.
    The \# Features column reports the input dimensionality, and $\rightarrow$ indicates dimensionality reduction prior to quantum embedding.
    }
\end{minipage}
\label{tab:data_details}
\small
\begin{tabular}{lcccc}
\toprule
Dataset
& Samples
& \#.Features
& Qubits
& Train / Val / Test \\
\midrule
IRIS~\citep{iris}
& 150
& 4
& 4
& 84 / 21 / 45 \\

VERTEBRAL~\citep{vert}
& 310
& 6
& 6
& 173 / 44 / 93 \\

T-CELL~\citep{tcell}
& 246
& 60 $\rightarrow$ 8
& 8
& 137 / 35 / 74 \\

GLASS~\citep{glass}
& 213
& 9 
& 9
& 119 / 30 / 64 \\

T-MNIST~\citep{mnist}
& 550
& 784 $\rightarrow$ 20
& 20
& 150 / 150 / 250 \\
\bottomrule
\end{tabular}
\end{table}

\paragraph{Common split and preprocessing protocol.}
Common split and preprocessing protocol. For IRIS, VERTEBRAL, and GLASS, we first construct stratified 70/30 outer-training/Test splits and reserve 20\% of each outer-training partition for validation, resulting in approximately 56/14/30 Train/validation/Test proportions. T-CELL uses a fixed 172/74 outer split, with the 172-example training pool further divided into 137 Train and 35 validation examples. All competing methods use identical sample indices within each dataset and seed. However, T-MNIST uses a fixed 150/150/250 Train/validation/Test split.

\paragraph{IRIS} We use all 150 examples and four continuous features from the UCI Iris dataset~\citep{iris}, retaining the original three balanced species classes. A \textit{StandardScaler}~\citep{scikit-learn} fitted on the outer-training pool transforms the four inputs, which are mapped directly to four qubits without dimensionality reduction.

\paragraph{VERTEBRAL} We use the binary Vertebral Column 2C~\citep{vert} task containing 210 abnormal and 100 normal examples with six biomechanical features. After standardization, each coordinate is transformed as $\mathbf{x}\rightarrow (\pi/2)\tanh(\mathbf{x})$ and mapped one-to-one to six qubits.

\paragraph{T-CELL} We use 246 examples from the combinatorial CAR-T signaling-motif dataset of \citet{tcell}. Four motif positions are represented as 15-category one-hot vectors, yielding a 60-dimensional input. We label an example positive when its \texttt{Nalm 6 Cytotoxicity} exceeds $0.62$ and negative otherwise. The fixed 172-example training pool is stratified into 137 Train and 35 validation examples, with 74 additional examples used for Test. A CNN--LSTM encoder~\citep{tcell} trained on the Train subset maps the inputs to eight angles via $\mathbf{z}=\pi\tanh(\mathbf{z}_{\mathrm{raw}})$. The frozen encoder is applied to validation and Test, and the eight coordinates are mapped directly to eight qubits.

\paragraph{GLASS} We use the nine continuous chemical features of the UCI Glass Identification dataset~\citep{glass} after removing the identifier and one duplicated feature vector, resulting in 213 examples. The original labels are regrouped into three classes: float-processed window glass, non-float-processed window glass, and non-window glass. Features are standardized and transformed by $\mathbf{x}\rightarrow (\pi/2)\tanh(\mathbf{x})$ before being mapped to nine qubits.

\paragraph{T-MNIST}We construct a balanced five-class subset of MNIST~\citep{mnist} using digits 0--4. We select 30 Train, 30 validation, and 50 Test images per digit, giving fixed partitions of 150/150/250 examples, and scale pixel values to $[0,1]$. A convolutional Wasserstein autoencoder (WAE)~\citep{wae} trained on the 150 train images produces a 20-dimensional latent representation. The latents are standardized using statistics of trainset, transformed as $\mathbf{x}\mapsto(\pi/2)\tanh(\mathbf{x})$, and mapped one-to-one to 20 qubits. The learned encoder and transformation are applied unchanged to validation and Test.

\section{Experimental Details}
\label{appendix:experiments_details}

\subsection{Genetic-Algorithm Circuit Search}
\label{appendix:ga-search}

\paragraph{Stage-wise circuit construction.}
Let $Q$ and $d$ denote the number of qubits and the input dimension, respectively. We construct a data-embedding circuit with at most $L_{\max}=5$ encoding layers,
\[
    U_{L}(\mathbf{x})=B_{L}(\mathbf{x})\cdots B_{1}(\mathbf{x}),
    \qquad
    B_{\ell}(\mathbf{x})=E_{\mathrm{lat}}V_{\ell}(\mathbf{x}).
\]
Each one-qubit block $V_\ell$ contains two data-dependent rotations per qubit,
\[
    V_{\ell}(\mathbf{x})
    =
    \bigotimes_{q=1}^{Q}
    \bigg[R_{a_{\ell q2}}\!\left(\tau_{\ell q2}(\mathbf{x})\right)
    R_{a_{\ell q1}}\!\left(\tau_{\ell q1}(\mathbf{x})\right)\bigg],
\]
where the rotation axis is selected from
\[
    a_{\ell qs}\in\{X,Y,Z\},
\]
and angle function is selected from
\[\mathcal T(\mathbf{x})
    =
    \{\mathbf{x}_j:1\le j\le d\} 
    \cup
    \left\{\frac{\mathbf{x}_j\mathbf{x}_k}{\pi}:1\le j<k\le d\right\}
    \cup\{0,\pi\},
    \qquad
    \tau_{\ell qs}(\mathbf{x})\in\mathcal T(\mathbf{x}).\]
Thus, each one-qubit gene specifies a rotation axis together with a single-feature, feature-product, or constant rotation angle. After crossover and mutation, a repair operation ensures that every input feature appears at least once among the rotation slots of each active layer.

The two-qubit block $E_\mathrm{lat}$ is fixed as
\[
    E_{\mathrm{lat}}
    =
    \prod_{(u,v)\in\mathcal E_{\mathrm{lat}}}
    \mathrm{CZ}_{uv},
\]
where $\mathcal E_{\mathrm{lat}}$ contains the horizontal and vertical nearest-neighbor edges of an open rectangular lattice. This connectivity was motivated by the square-lattice architecture of the recent IBM Quantum Nighthawk processor~\citep{ibm_nighthawk}. We therefore use a fixed nearest-neighbor lattice throughout the search and optimize only the data-encoding single-qubit rotations.

\begin{table*}[t]
    \centering
    \begin{minipage}{0.8\textwidth}
    \caption{Dataset-specific circuit layouts and GA hyperparameters. $Q$ is the number of qubits, lattice specifies the rows and columns of the rectangular grid, and $|\mathcal E_{\mathrm{lat}}|$ counts its nearest-neighbor CZ edges per layer. Population and Epochs/stage denote the population size and maximum number of generations at each of the five sequential stages. $E_{\mathrm{keep}}$ is the number of tournament-selected parents protected from mutation. Patience is the number of consecutive generations without sufficient improvement in the stage's monitored criterion before early stopping.}
    \label{tab:ga-hyperparameters}
    \end{minipage}
    \small
    \setlength{\tabcolsep}{4.5pt}
    \begin{tabular}{lccccccc}
        \toprule
        Dataset & $Q$ & Lattice & $|\mathcal E_{\mathrm{lat}}|$
        & Population & Epochs/stage & $E_{\rm keep}$ & Patience \\
        \midrule
        IRIS       & 4  & $2\times2$ & 4  & 200 & 50  & 8 & 10 \\
        VERTEBRAL  & 6  & $2\times3$ & 7  & 300 & 100 & 5 & 15 \\
        T-CELL     & 8  & $4\times2$ & 10 & 300 & 100 & 5 & 15 \\
        GLASS      & 9  & $3\times3$ & 12 & 300 & 100 & 5 & 15 \\
        T-MNIST    & 20 & $4\times5$ & 31 & 300 & 100 & 5 & 15 \\
        \bottomrule
    \end{tabular}
\end{table*}

\paragraph{Sequential optimization and candidate-bank construction.}
At stage $\ell$, the previously selected layers $B_1,\ldots,B_{\ell-1}$ form a frozen prefix, and the GA evolves only one newly initialized active layer $B_l$. Each individual is concatenated with this prefix and evaluated using the corresponding training objective. The population is updated by tournament selection with $k=4$, structural crossover, and mutation with probability $0.15$. Conditional on mutation, the active genome is re-randomized either at the whole-layer or single-qubit level. The best training-loss individual is appended to the frozen prefix at the end of each stage. Early stopping is applied independently when the incumbent fails to improve by at least $10^{-4}$ for the patience reported in Table~\ref{tab:ga-hyperparameters}.

We retain a global bank of the 20 unique circuits with the lowest observed training losses among candidates of depths 2–5. Candidate identity is defined by the complete circuit genome. Validation and test data are not used to evolve the GA population or construct the candidate bank. After the search, a prespecified validation rule selects the final circuit, and the selected circuit is fixed during the NQE fine-tuning stage. 

\paragraph{Fidelity kernel and prototype construction.}
For a circuit genome $G$ and its embedding unitary $U_\Phi$, we use the fidelity kernel
\[
    k_{\Phi}(\mathbf{x},\mathbf{x}')
    =
    \left|
        \langle 0|U_{\Phi}^{\dagger}(\mathbf{x})U_{\Phi}(\mathbf{x}')|0\rangle
    \right|^2.
\]
Let $N$ be the training size, $C$ the number of classes, and $n_y$ the number of training samples in class $y$, and $P=\binom{C}{2}$.

Because the input features are used as rotation-gate angles, each coordinate can be interpreted as a periodic angular variable. We therefore construct the $K=1$ class prototype in the input angle space using the coordinate-wise circular mean rather than the arithmetic mean,
\[
    (\mathbf{c}_y)_j
    =
    \operatorname{atan2}\!\left(
        \frac{1}{n_y}\sum_{i:y_i=y}\sin \mathbf{x}_{ij},
        \frac{1}{n_y}\sum_{i:y_i=y}\cos \mathbf{x}_{ij}
    \right).
\]

\paragraph{Prototype objective.}
The single Prototype GA minimizes
\[
    \mathcal L_{\mathrm P}(\Phi)
    =
    \underbrace{\frac{1}{\binom{C}{2}}
    \sum_{a<b}k_{\Phi}(\mathbf{c}_a,\mathbf{c}_b)
    }_{\mathcal L_{\mathrm{cross}}}
    +
    \lambda_{\mathrm P}
    \underbrace{\frac{1}{C}
    \sum_y \frac{1}{n_y}
    \sum_{i:y_i=y}
    \left[1-k_{\Phi}(\mathbf{x}_i,\mathbf{c}_y)\right]
    }_{\mathcal L_{\mathrm{sample}}},
\]
The first term penalizes fidelity between distinct class prototypes, while the second promotes high fidelity between each sample and its own class prototype. The weight $\lambda_{\mathrm P}$ is selected using validation data at the dataset level and remains fixed during test evaluation. The values used for IRIS, VERTEBRAL, T-CELL, GLASS, and T-MNIST are $0.5$, $0.7$, $0.3$, $0.5$, and $0.5$, respectively.

For $N$ training samples and $C$ classes, one evaluation of the prototype objective requires
\[
    Q_{\mathrm P}=N+\binom{C}{2}
\]
logical fidelity evaluations.

\paragraph{Multi-prototype objective.}
For GLASS, we extend the single-prototype construction by retaining one prototype per original sublabel within each merged class, giving $(K_1,K_2,K_3)=(2,1,3)$. Prototypes are computed as coordinatewise circular means of their inner-Train samples. Each sample remains assigned to its original sublabel prototype, and $\pi_{y,a}=n_{y,a}/n_y$ denotes its fixed within-class mixture weight.

We extend $\mathcal L_{\mathrm{sample}}$ by replacing $c_y$ with the assigned prototype $\mathbf{c}_{y,a(i)}$. For $\mathcal L_{\mathrm{cross}}$, each class-pair fidelity is replaced by the mixture-weighted fidelity $\sum_{a,b}\pi_{y,a}\pi_{z,b}k_\Phi(\mathbf{c}_{y,a},\mathbf{c}_{z,b})$. We additionally penalize separation between prototypes belonging to the same class,
\begin{equation*}
\mathcal L_{\mathrm{in}}(\Phi)
=
\frac{1}{C^2}
\sum_y
\sum_{a,a'=1}^{K_y}
\pi_{y,a}\pi_{y,a'}
\left[1-k_\Phi(\mathbf{c}_{y,a},\mathbf{c}_{y,a'})\right].
\end{equation*}

Unlike the single-prototype weighted sum, Multi-$K$ GA jointly optimizes three components using an augmented Tchebycheff fitness. The implemented component vector is
\begin{equation*}
\bigl(\ell_1,\ell_2,\ell_3\bigr)
=
\left(
\mathcal L_{\mathrm{sample}},
\mathcal L_{\mathrm{in}},
\frac{C-1}{C}\mathcal L_{\mathrm{cross}}
\right).
\end{equation*}
Here, the cross-class coefficient is $2/C^2$, rather than $1/\binom{C}{2}$ used in the single-prototype objective.

For each component, we define
\begin{equation*}
z_j(\Phi)
=
\frac{\ell_j(\Phi)-q_{j,0.1}}
{\max\{q_{j,0.9}-q_{j,0.1},10^{-3}\}},
\end{equation*}
where $q_{j,0.1}$ and $q_{j,0.9}$ are its empirical 10th and 90th percentiles in the initial population of the first GA stage. The final scalar fitness is
\begin{equation*}
\mathcal L_{\mathrm{MK}}(\Phi)
=
\max_{j\in\mathcal A}z_j(\Phi)
+
\frac{0.05}{|\mathcal A|}
\sum_{j\in\mathcal A}z_j(\Phi),
\end{equation*}
where $\mathcal A$ contains components whose initial interquantile range exceeds $10^{-12}$. Thus, GA primarily minimizes the largest normalized component, with a small additional penalty on their mean. All three component coefficients are one, and the normalization references remain fixed throughout the search. Each fitness evaluation requires $N+\binom{K_{\mathrm{tot}}}{2}=N+15$ fidelity terms.

\paragraph{Full-pairwise objective.}
Let
\[
    \mathcal P_{aa}
    =\{(i,j):i<j,\;y_i=y_j=a\},
    \qquad
    \mathcal P_{ab}
    =\{(i,j):y_i=a,\;y_j=b\},\quad a<b.
\]
The Full-pairwise GA (\textit{Full}) uses the class-balanced loss
\[
    \mathcal L_{\mathrm F}(\Phi)
    =
    \frac{1}{2}\mathcal L_{\mathrm{same}}(\Phi)
    +
    \frac{1}{2}\mathcal L_{\mathrm{diff}}(\Phi),
\]
where
\[
    \mathcal L_{\mathrm{same}}(\Phi)
    =
    \frac{1}{C}\sum_a
    \frac{1}{|\mathcal P_{aa}|}
    \sum_{(i,j)\in\mathcal P_{aa}}
    \left[1-k_{\Phi}(\mathbf{x}_i,\mathbf{x}_j)\right]^2,
\]
and
\[
    \mathcal L_{\mathrm{diff}}(\Phi)
    =
    \frac{1}{P}\sum_{a<b}
    \frac{1}{|\mathcal P_{ab}|}
    \sum_{(i,j)\in\mathcal P_{ab}}
    k_{\Phi}(\mathbf{x}_i,\mathbf{x}_j)^2.
\]
Because every unordered training pair is evaluated, its logical fidelity cost per candidate is
\[
    Q_{\mathrm F}=\binom{N}{2}.
\]

\paragraph{Budget-matched objective.}
To separate the effect of prototype-based supervision from the number of fidelity evaluations, the budget-matched baseline (\textit{Pairwise-B}) uses the same logical budget as the Prototype objective. At generation $t$, a deterministic seeded set $\mathcal E_t$ of
\[
    Q_{\mathrm U}=N+\binom{C}{2}=Q_{\mathrm P}
\]
training pairs is sampled without replacement from the complete pair pool. The same sampled set is used for all candidate circuits in that generation and is refreshed at the next generation.

Writing $\mathcal E_t^{\mathrm{same}}$ and $\mathcal E_t^{\mathrm{diff}}$ for its same- and different-class subsets, the objective is
\[
    \mathcal L_{\mathrm U}^{(t)}(\Phi)
    =
    \frac{1}{2|\mathcal E_t^{\mathrm{same}}|}
    \sum_{(i,j)\in\mathcal E_t^{\mathrm{same}}}
    \left[1-k_{\Phi}(\mathbf{x}_i,\mathbf{x}_j)\right]^2
    +
    \frac{1}{2|\mathcal E_t^{\mathrm{diff}}|}
    \sum_{(i,j)\in\mathcal E_t^{\mathrm{diff}}}
    k_{\Phi}(\mathbf{x}_i,\mathbf{x}_j)^2.
\]
\textit{Prototype} and \textit{Pairwise-B} therefore use identical fidelity budgets while differing in their supervision structure. The \textit{Prototype} objective therefore connects each sample to a persistent class anchor, whereas \textit{Pairwise-B} observes a refreshed subset of sample pairs.

\textit{Prototype} and \textit{Full} use the generation-best raw training loss to update the stage incumbent. \textit{Pairwise-B} uses an EMA of the generation-best sampled loss, with decay 0.9, and retains the raw-loss-best individual from the corresponding generation. Candidate-bank ranking always uses observed raw losses.

\paragraph{Validation-based circuit selection.}
After GA search, we select the circuit with the lowest validation loss from the frozen candidate bank. \textit{Prototype }uses the class-balanced prototype objective, whereas the \textit{Full} use their corresponding validation pairwise losses. For the \textit{Pairwise-B}, the validation pair set is fixed across candidate circuits. The selected circuit is then kept fixed during NQE refinement. Test data are not used for circuit selection.

\subsection{Neural Quantum Embedding Refinement}
\label{app:nqe}

\paragraph{Fixed-circuit residual refinement.}
After the GA selects embedding circuit $\Phi$, NQE keeps all quantum gates, feature assignments, and entangling edges fixed and learns only a classical residual map before the circuit. For every dataset and method, $g_\theta$ is the same shared simple D2--H32 network,
\begin{align*}
    r_\theta(\mathbf{x})
    &=W_3\,\operatorname{ReLU}\!\left(
        W_2\,\operatorname{ReLU}(W_1\mathbf{x}+b_1)+b_2
      \right)+b_3,\label{eq:nqe-residual-branch}\\
    g_\theta(\mathbf{x})
    &=\mathbf{x}+\widetilde r_\theta(\mathbf{x}).
\end{align*}
Both hidden layers have width 32. The weight and bias of output layer $(W_3, b_3)$ are initialized to zero, so $g_{\theta_0}(\mathbf{x})=\mathbf{x}$ exactly and epoch zero reproduces the selected GA embedding. This identity initialization prevents a random input transformation from perturbing the GA-optimized embedding before refinement begins. A single mapper transforms the input, and the resulting features are shared across all encoding layers of the fixed circuit.

We compare two residual domains while keeping the neural network architecture fixed. The unbounded version uses $\widetilde r_\theta=r_\theta$, whereas the bounded version uses
\begin{equation*}
    \widetilde r_\theta(x)
    =
    \alpha\tanh\!\left(\frac{r_\theta(x)}{\alpha}\right),
    \qquad \alpha=\frac{\pi}{4},
\end{equation*}
where the hyperbolic tangent is applied coordinate-wise. Hence the latter enforces $|g_\theta(\mathbf{x})_j-\mathbf{x}_j|\leq\pi/4$ without changing the derivative of the residual branch at the identity initialization. This comparison examines how restricting the magnitude of the learned input corrections affects NQE fine-tuning and downstream classification performance.

\paragraph{Training objectives.}
During NQE refinement, we optimize only the residual network parameters $\theta$, keeping the GA-selected circuit fixed. The prototypes remain fixed in input space, and both samples and prototypes are passed through the same residual mapper before fidelity evaluation. \textit{Prototype} and \textit{Pairwise-B} NQE retain their respective GA training objectives, evaluated on the mapped inputs.

Improving average prototype compactness does not necessarily improve the samples near the downstream decision boundary, as illustrated in Appendix~\ref{app:t_mnist_bad}. This motivates examining whether continued compactness minimization is beneficial during NQE refinement. We therefore consider a Dual-Guard variant that removes the direct incentive for further compactness reduction once a prescribed class-wise improvement has been attained. Specifically, we replace the mean compactness penalty with a soft constraint penalty relative to the initial GA embedding.

Let
\begin{equation*}
Q_y(\theta)
=
\frac{1}{n_y}\sum_{i:y_i=y}
\left[1-k_{\Phi,\theta}(\mathbf{x}_i,\mathbf{c}_y)\right],
\qquad
v_y(\theta)
=
\frac{Q_y(\theta)}
{\max\{Q_y(\theta_0),10^{-8}\}}
-(1-\delta).
\end{equation*}
Here, $\theta_0$ denotes the identity-initialized mapper, and
$v_y(\theta)\leq 0$ indicates that the prescribed compactness
target has been attained.
The guarded objective is
\begin{equation*}
\mathcal L_{\mathrm G}(\theta,\boldsymbol{\eta})
=
\mathcal L_{\mathrm{cross}}(\Phi,\theta)
+
\frac{1}{C}\sum_y
\left(
\eta_y[v_y(\theta)]_+
+
\frac{\rho}{2}[v_y(\theta)]_+^2
\right),
\end{equation*}
where $[u]_+=\max\{u,0\}$, $\delta=0.05$, and $\rho=0.1$.
The multipliers are initialized as
$\eta_y^{(0)}=\lambda_{\mathrm P}Q_y(\theta_0)$
and updated by projected dual ascent:
\begin{equation*}
\eta_y^{(t+1)}
=
\operatorname{clip}_{[0,10]}
\left(
\eta_y^{(t)}+0.05\,v_y(\theta_t)
\right).
\end{equation*}
This soft penalty encourages, but does not guarantee, satisfaction of the class-wise targets. It reuses the same sample--prototype fidelities and therefore requires no additional quantum evaluations.

\paragraph{Optimization and hyperparameter grid.}
Each run performs 40 Adam updates with a constant learning rate, zero weight decay, and global gradient-norm clipping at 1.0. Prototype uses the full inner-training sample--prototype objective, whereas \textit{Pairwise-B} uses a refreshed budget-matched pair set at each update. For Dual Guard, $\lambda_{\mathrm P}$ determines the initial dual multipliers rather than a fixed compactness weight. The residual mapper states are saved at
\begin{equation*}
    \mathcal T=\{0,5,10,15,20,25,30,35,40\}.
\end{equation*}
For the four tabular datasets, the validation search uses
\begin{align*}
    \alpha
    \in\left\{\frac{\pi}{4},\ \text{unbounded}\right\},\quad
    \eta_{\theta}
    \in\{0.001,0.003,0.005,0.01\}.
\end{align*}
The D2--H32 architecture, optimizer, training horizon, and all remaining settings are held common across this grid. The bound and learning rate are frozen at the dataset--method level using validation data only; Test results do not enter this choice. Table~\ref{tab:nqe-final-settings} lists the frozen endpoints.

\begin{table*}[t]
    \centering
    \begin{minipage}{0.78\textwidth}
    \caption{Frozen NQE hyperparameters used for the main comparison.``Unb.'' denotes an unbounded residual. Checkpoints are selected using each method's validation objective.}
    \label{tab:nqe-final-settings}
    \end{minipage}
    \small
    \setlength{\tabcolsep}{4.0pt}
    \begin{tabular}{lcccc}
        \toprule
        Dataset & Prototype loss & $\lambda_{\mathrm P}$
        & Prototype bound/LR &  Pairwise-B bound/LR \\
        \midrule
        IRIS      & Mean       & 0.5 & Unb./0.010     & Unb./0.005 \\
        VERTEBRAL & Mean       & 0.7 & Unb./0.010     & Unb./0.010 \\
        T-CELL    & Mean       & 0.3 & Unb./0.003     & Unb./0.010 \\
        GLASS     & Dual Guard & 0.5 & $\pi/4$/0.010  & Unb./0.010 \\
        T-MNIST   & Mean       & 0.5 & $\pi/4$/0.001  & $\pi/4$/0.001 \\
        \bottomrule
    \end{tabular}
\end{table*}

\paragraph{Checkpoint and QKSVM selection.}
NQE is trained only on the train partition. For Mean Prototype, the checkpoint in $\mathcal T$ with the smallest validation loss is selected using the frozen inner-Train prototypes. Dual Guard instead uses a feasibility-first rule. Checkpoints are ranked by the maximum positive class-wise validation violation, followed by the mean positive violation, cross-prototype fidelity, and the earlier epoch. Validation violations are measured relative to the epoch-zero validation compactness.

For \textit{Pairwise-B}, the checkpoint with the smallest validation pairwise loss is selected. The same seeded validation-pair set is used across checkpoints for the tabular datasets, whereas T-MNIST uses all unordered validation pairs. These validation evaluations do not update the mapper.

After the NQE refinement, the kernel on the complete outer-training set, comprising inner Train and validation, is constructed. For the single-prototype experiments and their \textit{Pairwise-B} baselines, the QKSVM regularization parameter is selected by stratified five-fold cross-validation on this set, using MCC and the grid
\begin{equation}
\mathcal C_{\mathrm{SVM}}
=
\{0.01,0.05,0.1,0.25,0.5,1,2.5,5,10,25,50\}.
\label{svm_grid}
\end{equation}
The QKSVM is then refitted on the complete outer-training set and evaluated on the held-out Test set. The mapper is not refitted on validation or Test data, and neither the checkpoint nor the SVM regularization parameter is reselected using Test performance.

\paragraph{Multi-prototype GLASS refinement.}
Multi-$K$ NQE retains the GA-stage prototypes, sublabel assignments, and mixture weights with $(K_1,K_2,K_3)=(2,1,3)$. The mapper architecture and training schedule are unchanged. Mean NQE minimizes $\ell_1+\ell_2+\ell_3$ without Tchebycheff normalization, whereas Dual Guard replaces $\ell_1$ with the class-wise guard, initialized with $\eta_y^{(0)}=Q_y(\theta_0)$. Checkpoint selection follows the corresponding Mean or Guard rule, with $\ell_2+\ell_3$ as the Guard structural tie-breaker.

The selected bound/LR settings are unbounded/$0.01$ for Mean and $(\pi/4)/0.003$ for Dual Guard. QKSVM evaluation follows the same procedure and the comparison also uses independently validation-selected single- and multi-prototype endpoints, with no Test-based reselection.

\section{Validation-Based Circuit Selection}
\label{app:selection}

\paragraph{Common candidate bank.}
The selector comparison changes only the post-search choice of circuit. For each dataset and seed, we use the same bank of $M=20$ unique prototype-optimized circuits as the main GA comparison. The bank retains the lowest training-loss candidates encountered across GA stages after genotype deduplication. Validation scores do not affect GA fitness or bank membership. We consider the single-prototype setting, with one fixed training
prototype per class. The train and validation sizes are denoted by $N$ and $V$, respectively, and the number of classes by $C$. Class prototypes $\mathbf{c}_y$ are computed once from train using the same coordinate-wise circular means as in GA search and remain fixed in input space for all candidates. The three selectors below rerank this identical bank.

\paragraph{P0: validation prototype loss.}
P0 selects the candidate with the smallest GA prototype objective evaluated on the validation set, keeping the training prototypes and $\lambda_{\mathrm P}$ fixed. Each candidate requires $V+\binom{C}{2}$ fidelity evaluations, one per validation sample--own-prototype pair and one per unordered prototype pair. Ties are resolved by the training-loss ordering of the bank and then by the deterministic candidate identifier.

\paragraph{P1: exact prototype RKHS classifier.}
P1 evaluates each validation sample against every class prototype and predicts
\begin{equation*}
\widehat y_\Phi(x)
=\mathop{\mathrm{arg\,max}}_{y\in\{1,\ldots,C\}} k_\Phi(x,c_y).
\end{equation*}
For two classes, this is the sign rule induced by $h_{ab}(x)=k_\Phi(x,c_a)-k_\Phi(x,c_b)$, the prototype RKHS witness. The multiclass implementation selects the largest class score directly.

Candidates are ranked by validation MCC of these predictions. To resolve MCC ties, we use the 10th percentile of the normalized prototype margins,
\begin{equation*}
\gamma_i^{\mathrm{proto}}(\Phi)
=\min_{b\ne y_i}
\frac{k_\Phi(x_i,c_{y_i})-k_\Phi(x_i,c_b)}
{\sqrt{2\bigl[1-k_\Phi(c_{y_i},c_b)\bigr]}}.
\end{equation*}
The denominator is $\lVert h_{y_i b}\rVert_{\mathcal H_\Phi}$. The margin therefore measures the least favorable true-class comparison in RKHS units. When candidates have the same validation MCC, we prefer the one with a larger margin at the 10th percentile of the validation margin distribution, favoring better separation near the difficult end of the validation samples. If this value is also tied, we prefer the smaller validation prototype loss and then use the fixed candidate order. Candidates whose prototypes from different classes map to nearly identical quantum states ($\sqrt{1-k_\Phi(c_a,c_b)}\leq10^{-12}$) are given lower priority, because their normalized margins involve division by a near-zero RKHS norm.

P1 requires neither SVM fitting nor a regularization search. In particular, it uses the witness classifier's validation MCC, not a direct minimization of the SVM-objective upper bound in Proposition~\ref{prop:svm_witness} or an optimization of its scaling parameter.

\paragraph{P2: validation QKSVM reference.}
For every candidate, P2 constructs the Gram matrix of train set and chooses the SVM regularization parameter by stratified five-fold MCC cross-validation on the train set. The QKSVM is then fitted on all train samples and scored on validation samples using the rectangular validation--train kernel whose size is $V\times N$. The candidate with the largest validation MCC is selected; ties prefer larger inner-Train CV MCC, smaller validation prototype loss. We set the SVM regularization grid in Eq.~\ref{svm_grid} and the CV folds are shared across candidates within each split. P2 is therefore a sample-trained QKSVM reference, not another prototype-classifier rule.

\paragraph{Validation cost.}
For each candidate circuit, the numbers of distinct fidelity evaluations required by the selectors are
\begin{align*}
Q_{\mathrm{P0}}&=V+\binom{C}{2},\\
Q_{\mathrm{P1}}&=CV+\binom{C}{2},\\
Q_{\mathrm{P2,\,SVM}}&=\binom{N}{2}+NV.
\end{align*}
The prototype-pair evaluations in P1 are used to normalize the margins for tie-breaking. If P0 evaluations are available, P1 requires only $(C-1)V$ additional sample--prototype fidelities. Self-fidelities equal one and need not be evaluated, and the same kernels are reused across SVM hyperparameters and CV folds. These counts describe logical fidelity evaluations rather than wall-clock runtime. For binary classification, P1 uses $2V$ sample--prototype evaluations and the corresponding multiclass cost is $CV$, in addition to the prototype-pair terms. P0 and P1 remain linear in $V$ for fixed $C$, whereas P2 requires the dense training kernel.

After circuit selection, all three selectors use the same downstream QKSVM evaluation protocol described in Appendix~\ref{appendix:experiments_details}. The comparison therefore assesses the prototype classifier as an embedding-selection proxy, not as a replacement for the fitted QKSVM or a guarantee of its Test performance.

\section{T-MNIST Optimization Analysis}
\label{app:tmnist}
\subsection{Why Prototype-Guided Optimization Performs well}
\label{app:t_mnist_good}

\begin{figure*}[t]
  \centering
  \includegraphics[width=\textwidth]{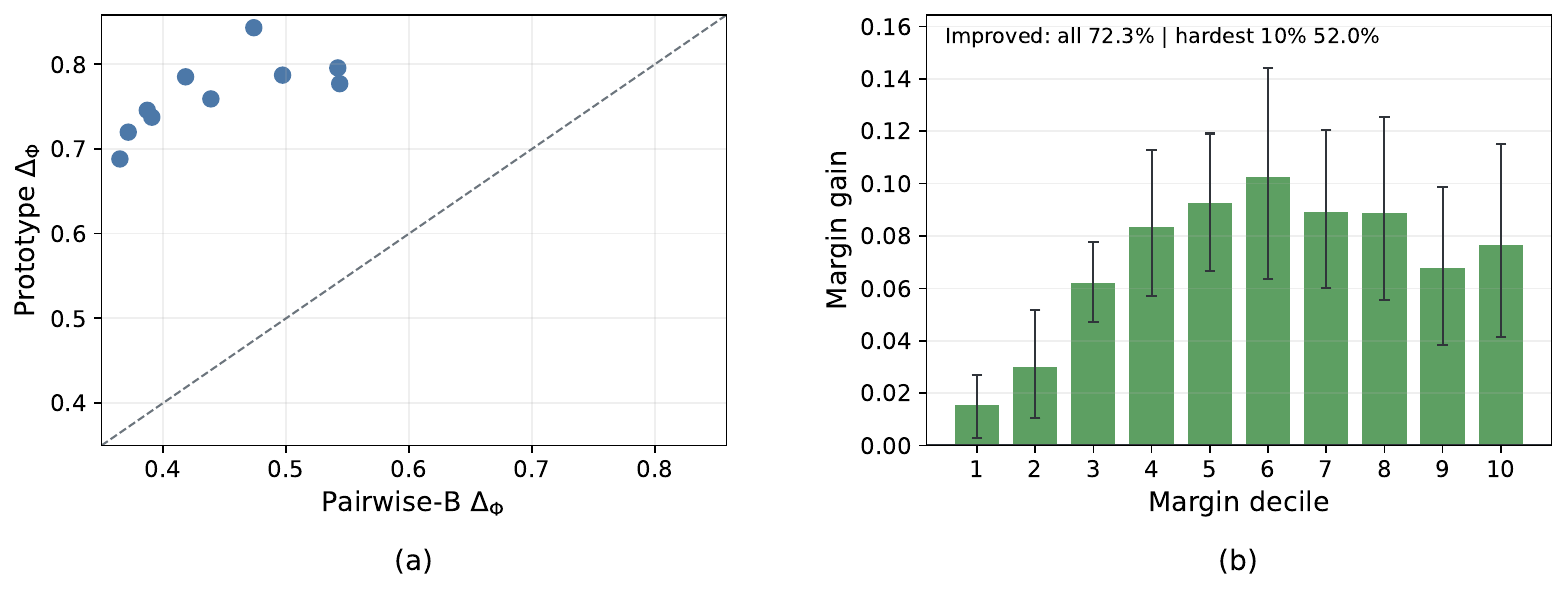}
  \begin{minipage}{1\linewidth}
        \centering
  \caption{
  Geometric analysis for the T-MNIST NQE endpoints in Table~\ref{tab:main_results}. \textbf{(a)} Exact train class-pair ensemble distance $\Delta_{\Phi,ab}$. Each marker represents one of the ten class pairs and reports its average over ten seeds.
  \textbf{(b)} Mean Test margin gain,
  $\Delta\widetilde m=\widetilde m^{\mathrm{Prototype}} -\widetilde m^{\mathrm{Pairwise\text{-}B}}$, 
  within deciles defined by the leave-one-seed-out \textit{Pairwise-B} difficulty. Decile 1 contains the hardest 10\% of Test samples. Error bars indicate 95\% seed-bootstrap intervals.
  }
  \label{fig:tmnist-nqe-advantage}
  \end{minipage}
\end{figure*}

We analyze the final T-MNIST embeddings reported in Table~\ref{tab:main_results}. For classes $a$ and $b$, let
\begin{equation}
    \Delta_{\Phi,ab}
    = \frac{1}{2}\left\|\rho_{\Phi,a}-\rho_{\Phi,b}\right\|_1,
    \qquad
    \rho_{\Phi,a}
    = \frac{1}{n_a}\sum_{i:y_i=a}
      \rho_\Phi(\mathbf{x}_i).
    \label{eq:tmnist-pair-ensemble-distance}
\end{equation}
We evaluate this quantity exactly on the train set for all class pairs with the final selected embedding model after NQE.

To connect this Train-side geometry to the downstream classifier, we also measure the weakest one-vs-one (OvO) QKSVM margin of each Test sample. Because the SVM regularization parameter can differ between methods and seeds, we use the RKHS-normalized geometric margin
\begin{equation}
    \widetilde m_i
    = \min_{k\ne y_i}
      \frac{d_{y_i,k}(\mathbf{x}_i)}{\lVert w_{y_i,k}\rVert_{\mathcal H}},
    \label{eq:tmnist-normalized-ovo-margin}
\end{equation}
where $d_{y_i,k}(\mathbf{x}_i)>0$ favors the true class in the binary comparison with class $k$. For each seed, we estimate the difficulty of a test sample using its average \textit{Pairwise-B} margin over the other nine seeds. We rank the samples by this independent reference and divide them into ten equal-sized groups, with decile 1 containing the hardest samples. Within each group, we compare the \textit{Prototype} and \textit{Pairwise-B} margins for the seed. Thus, the margin used to define sample difficulty does not come from the seed whose two methods are being compared.

Figure~\ref{fig:tmnist-nqe-advantage}(a) shows that all ten class-pair markers lie above the diagonal. Thus, \textit{Prototype} increases $\Delta_{\Phi,ab}$ for every class pair, rather than improving only the macro average. The average of exact class-ensemble distance increases from $0.443$ to $0.764$, accompanied by an increase in Test MCC from $0.642$ to $0.779$.

Figure~\ref{fig:tmnist-nqe-advantage}(b) further shows how this improved class geometry is reflected in the decision boundaries. \textit{Prototype} increases the geometric margin in $72.3\%$ of the observations, and the mean margin gain is positive in all ten difficulty deciles. However, the gain is weakest in the hardest decile. Only $52\%$ of its observations improve, and its mean gain is $0.0154$ with a 95\% interval of $[+0.0027,0.0270]$. Therefore, \textit{Prototype} improves class-ensemble separation and most test margins, but its advantage becomes less reliable near the most difficult decision boundaries. These findings provide a geometric account consistent with the \textit{Prototype} advantage reported in Table~\ref{tab:main_results}.

\subsection{Where Prototype-Guided Optimization Falls Short}
\label{app:t_mnist_bad}

\begin{figure}[t]
    \centering
    \includegraphics[width=1\linewidth]{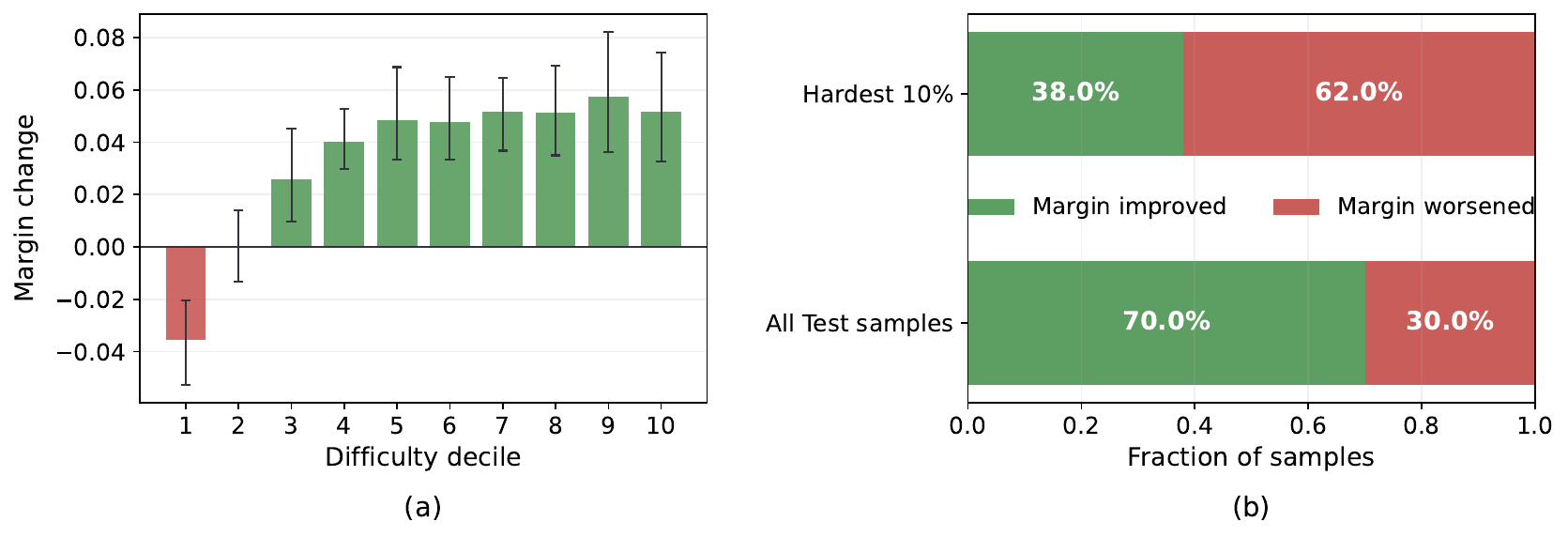}
        \centering
        \caption{T-MNIST lower-margin-tail diagnostic using the all-depth circuit bank from Figure~\ref{fig:certificate}.
        \textbf{(a)} Mean paired change in RKHS-normalized geometric margin from High- to Low-loss circuits across Test-sample difficulty deciles. Decile 1 contains the hardest samples, as determined by the leave-one-seed-out High-loss reference. Error bars denote 95\% seed-bootstrap intervals.
        \textbf{(b)} Fractions of paired seed--sample observations whose margin improves or worsens, evaluated over all Test samples and over the hardest decile. Circuit groups and QKSVM hyperparameters are fixed before Test evaluation.}
        \label{fig:app}
\end{figure}

We further examine why the favorable geometry produced by the prototype objective on T-MNIST does not translate monotonically into improved QKSVM performance in Figure~\ref{fig:certificate}. We use exactly the same circuit bank as in that analysis. For each of the ten GA seeds, the ten circuits with the lowest and highest inner-training prototype losses are retained.

For each circuit, we compute the RKHS-normalized geometric OvO margin $\widetilde m_i$ defined previously. Within each seed and loss group, we first average this margin over the ten circuits and then calculate the paired change for the same test sample,
\begin{equation*}
    \Delta \widetilde m_i
    =
    \widetilde m_i^{\mathrm{Low}}
    -
    \widetilde m_i^{\mathrm{High}}.
\end{equation*}
A positive value therefore means that the Low-loss circuits improve the sample's weakest OvO boundary.

To avoid defining sample difficulty using the same noisy score being compared, we use a leave-one-seed-out reference. For seed $s$, each test sample is ranked by its mean High-loss margin over the other nine seeds and assigned to one of ten equally sized difficulty groups. Decile 1 contains the hardest 10\% of samples. The Low-minus-High margin change for seed $s$ is then evaluated within these independently determined groups.

Figure~\ref{fig:app}(b) shows that the discrepancy is concentrated in the lower-margin tail rather than distributed uniformly across the test set. Across all paired test observations, the Low-loss circuits improve the geometric margin in $70\%$ of cases, increasing the mean margin from $0.0533$ to $0.0873$. Most difficulty deciles therefore benefit from the geometry favored by the prototype objective.

The behavior reverses, however, for the most difficult samples. Only $38\%$ of the observations in the hardest decile improve, and their mean margin change is $-0.0354$. Consistently, the test lower-decile margin decreases from $-0.0059$ to $-0.0614$, while the fraction of samples with a nonpositive weakest margin increases from $12.0\%$ to $19.2\%$. Thus, the increase in the average margin masks a deterioration near the most difficult OvO boundaries.

These results indicate an objective--downstream mismatch rather than a failure to optimize the prototype objective. Lower prototype loss improves the average prototype geometry and benefits most samples, but the mean-based objective does not explicitly protect the lower-margin tail that is particularly important to the QKSVM.


\begin{thebibliography}{38}
\providecommand{\natexlab}[1]{#1}
\providecommand{\url}[1]{\texttt{#1}}
\expandafter\ifx\csname urlstyle\endcsname\relax
  \providecommand{\doi}[1]{doi: #1}\else
  \providecommand{\doi}{doi: \begingroup \urlstyle{rm}\Url}\fi

\bibitem[Altares-López et~al.(2021)Altares-López, Ribeiro, and García-Ripoll]{IOP_NSGA2}
Sergio Altares-López, Angela Ribeiro, and Juan~José García-Ripoll.
\newblock Automatic design of quantum feature maps.
\newblock \emph{Quantum Science and Technology}, 6\penalty0 (4):\penalty0 045015, aug 2021.
\newblock \doi{10.1088/2058-9565/ac1ab1}.
\newblock URL \url{https://doi.org/10.1088/2058-9565/ac1ab1}.

\bibitem[Bae \& Kwek(2015)Bae and Kwek]{Bae_2015}
Joonwoo Bae and Leong-Chuan Kwek.
\newblock Quantum state discrimination and its applications.
\newblock \emph{Journal of Physics A: Mathematical and Theoretical}, 48\penalty0 (8):\penalty0 083001, jan 2015.
\newblock \doi{10.1088/1751-8113/48/8/083001}.
\newblock URL \url{https://doi.org/10.1088/1751-8113/48/8/083001}.

\bibitem[Banchi et~al.(2021)Banchi, Pereira, and Pirandola]{qml_generalization_banchi}
Leonardo Banchi, Jason Pereira, and Stefano Pirandola.
\newblock Generalization in quantum machine learning: A quantum information standpoint.
\newblock \emph{PRX Quantum}, 2:\penalty0 040321, Nov 2021.
\newblock \doi{10.1103/PRXQuantum.2.040321}.
\newblock URL \url{https://link.aps.org/doi/10.1103/PRXQuantum.2.040321}.

\bibitem[Barreto \& Neto(2005)Barreto and Neto]{vert}
Guilherme Barreto and Ajalmar Neto.
\newblock {Vertebral Column}.
\newblock UCI Machine Learning Repository, 2005.
\newblock {DOI}: https://doi.org/10.24432/C5K89B.

\bibitem[Biamonte et~al.(2017)Biamonte, Wittek, Pancotti, Rebentrost, Wiebe, and Lloyd]{qml_biamonte}
Jacob Biamonte, Peter Wittek, Nicola Pancotti, Patrick Rebentrost, Nathan Wiebe, and Seth Lloyd.
\newblock Quantum machine learning.
\newblock \emph{Nature}, 549\penalty0 (7671):\penalty0 195--202, 2017.

\bibitem[Caro et~al.(2021)Caro, Gil-Fuster, Meyer, Eisert, and Sweke]{PQC_encoding_gen}
Matthias~C. Caro, Elies Gil-Fuster, Johannes~Jakob Meyer, Jens Eisert, and Ryan Sweke.
\newblock Encoding-dependent generalization bounds for parametrized quantum circuits.
\newblock \emph{{Quantum}}, 5:\penalty0 582, November 2021.
\newblock ISSN 2521-327X.
\newblock \doi{10.22331/q-2021-11-17-582}.
\newblock URL \url{https://doi.org/10.22331/q-2021-11-17-582}.

\bibitem[Cortes et~al.(2012)Cortes, Mohri, and Rostamizadeh]{CKTA}
Corinna Cortes, Mehryar Mohri, and Afshin Rostamizadeh.
\newblock Algorithms for learning kernels based on centered alignment.
\newblock \emph{The Journal of Machine Learning Research}, 13:\penalty0 795--828, 2012.

\bibitem[Daniels et~al.(2022)Daniels, Wang, Simic, Bhargava, Capponi, Tonai, Yu, Bianco, and Lim]{tcell}
Kyle~G. Daniels, Shangying Wang, Milos~S. Simic, Hersh~K. Bhargava, Sara Capponi, Yurie Tonai, Wei Yu, Simone Bianco, and Wendell~A. Lim.
\newblock Decoding car t cell phenotype using combinatorial signaling motif libraries and machine learning.
\newblock \emph{Science}, 378\penalty0 (6625):\penalty0 1194--1200, 2022.
\newblock \doi{10.1126/science.abq0225}.
\newblock URL \url{https://www.science.org/doi/abs/10.1126/science.abq0225}.

\bibitem[Fisher(1936)]{iris}
R.~A. Fisher.
\newblock {Iris}.
\newblock UCI Machine Learning Repository, 1936.
\newblock {DOI}: https://doi.org/10.24432/C56C76.

\bibitem[German(1987)]{glass}
B.~German.
\newblock {Glass Identification}.
\newblock UCI Machine Learning Repository, 1987.
\newblock {DOI}: https://doi.org/10.24432/C5WW2P.

\bibitem[Glick et~al.(2024)Glick, Gujarati, Córcoles, Kim, Kandala, Gambetta, and Temme]{covariant_qkernel}
Jennifer~R. Glick, Tanvi~P. Gujarati, Antonio~D. Córcoles, Youngseok Kim, Abhinav Kandala, Jay~M. Gambetta, and Kristan Temme.
\newblock Covariant quantum kernels for data with group structure.
\newblock \emph{Nature Physics}, 20\penalty0 (3):\penalty0 479–483, January 2024.
\newblock ISSN 1745-2481.
\newblock \doi{10.1038/s41567-023-02340-9}.
\newblock URL \url{http://dx.doi.org/10.1038/s41567-023-02340-9}.

\bibitem[Havl{\'\i}{\v{c}}ek et~al.(2019)Havl{\'\i}{\v{c}}ek, C{\'o}rcoles, Temme, Harrow, Kandala, Chow, and Gambetta]{zzfeaturemap}
Vojt{\v{e}}ch Havl{\'\i}{\v{c}}ek, Antonio~D C{\'o}rcoles, Kristan Temme, Aram~W Harrow, Abhinav Kandala, Jerry~M Chow, and Jay~M Gambetta.
\newblock Supervised learning with quantum-enhanced feature spaces.
\newblock \emph{Nature}, 567\penalty0 (7747):\penalty0 209--212, 2019.

\bibitem[He et~al.(2016)He, Zhang, Ren, and Sun]{residual}
Kaiming He, Xiangyu Zhang, Shaoqing Ren, and Jian Sun.
\newblock Deep residual learning for image recognition.
\newblock In \emph{Proceedings of the IEEE conference on computer vision and pattern recognition}, pp.\  770--778, 2016.

\bibitem[Hearst et~al.(1998)Hearst, Dumais, Osuna, Platt, and Scholkopf]{SVM}
M.A. Hearst, S.T. Dumais, E.~Osuna, J.~Platt, and B.~Scholkopf.
\newblock Support vector machines.
\newblock \emph{IEEE Intelligent Systems and their Applications}, 13\penalty0 (4):\penalty0 18--28, 1998.
\newblock \doi{10.1109/5254.708428}.

\bibitem[Heo \& Park(2026)Heo and Park]{GQE}
Jaewoong Heo and Daniel~K. Park.
\newblock Generative quantum data embeddings for supervised learning.
\newblock \emph{arXiv preprint arXiv:2605.30866}, 2026.

\bibitem[Huang et~al.(2021)Huang, Broughton, Mohseni, Babbush, Boixo, Neven, and McClean]{powerofdata}
Hsin-Yuan Huang, Michael Broughton, Masoud Mohseni, Ryan Babbush, Sergio Boixo, Hartmut Neven, and Jarrod~R McClean.
\newblock Power of data in quantum machine learning.
\newblock \emph{Nature communications}, 12\penalty0 (1):\penalty0 2631, 2021.

\bibitem[Hubregtsen et~al.(2022)Hubregtsen, Wierichs, Gil-Fuster, Derks, Faehrmann, and Meyer]{TQE_physicalreview}
Thomas Hubregtsen, David Wierichs, Elies Gil-Fuster, Peter-Jan H.~S. Derks, Paul~K. Faehrmann, and Johannes~Jakob Meyer.
\newblock Training quantum embedding kernels on near-term quantum computers.
\newblock \emph{Phys. Rev. A}, 106:\penalty0 042431, Oct 2022.
\newblock \doi{10.1103/PhysRevA.106.042431}.
\newblock URL \url{https://link.aps.org/doi/10.1103/PhysRevA.106.042431}.

\bibitem[Hur \& Park(2025)Hur and Park]{quantum_margin}
Tak Hur and Daniel~K. Park.
\newblock Understanding generalization in quantum machine learning with margins.
\newblock In \emph{Proceedings of the 42nd International Conference on Machine Learning}, volume 267 of \emph{Proceedings of Machine Learning Research}, pp.\  26338--26360. PMLR, 13--19 Jul 2025.
\newblock URL \url{https://proceedings.mlr.press/v267/hur25a.html}.

\bibitem[Hur et~al.(2024)Hur, Araujo, and Park]{NQE}
Tak Hur, Israel~F. Araujo, and Daniel~K. Park.
\newblock Neural quantum embedding: Pushing the limits of quantum supervised learning.
\newblock \emph{Phys. Rev. A}, 110:\penalty0 022411, Aug 2024.
\newblock \doi{10.1103/PhysRevA.110.022411}.
\newblock URL \url{https://link.aps.org/doi/10.1103/PhysRevA.110.022411}.

\bibitem[Javadi-Abhari et~al.(2024)Javadi-Abhari, Treinish, Krsulich, Wood, Lishman, Gacon, Martiel, Nation, Bishop, Cross, et~al.]{qiskit}
Ali Javadi-Abhari, Matthew Treinish, Kevin Krsulich, Christopher~J Wood, Jake Lishman, Julien Gacon, Simon Martiel, Paul~D Nation, Lev~S Bishop, Andrew~W Cross, et~al.
\newblock Quantum computing with qiskit.
\newblock \emph{arXiv preprint arXiv:2405.08810}, 2024.

\bibitem[Katoch et~al.(2020)Katoch, Chauhan, and Chahar]{GA}
Sourabh Katoch, Sumit Chauhan, and Vijay Chahar.
\newblock A review on genetic algorithm: past, present, and future.
\newblock \emph{Multimedia Tools and Applications}, 80:\penalty0 8091--8126, Oct 2020.
\newblock \doi{10.1007/s11042-020-10139-6}.

\bibitem[K{\"u}bler et~al.(2021)K{\"u}bler, Buchholz, and Sch{\"o}lkopf]{qk_inductive}
Jonas K{\"u}bler, Simon Buchholz, and Bernhard Sch{\"o}lkopf.
\newblock The inductive bias of quantum kernels.
\newblock \emph{Advances in Neural Information Processing Systems}, 34:\penalty0 12661--12673, 2021.

\bibitem[Lecun et~al.(1998)Lecun, Bottou, Bengio, and Haffner]{mnist}
Y.~Lecun, L.~Bottou, Y.~Bengio, and P.~Haffner.
\newblock Gradient-based learning applied to document recognition.
\newblock \emph{Proceedings of the IEEE}, 86\penalty0 (11):\penalty0 2278--2324, 1998.
\newblock \doi{10.1109/5.726791}.

\bibitem[Lei et~al.(2024)Lei, Du, Mi, Yu, and Liu]{neuro_serach_iclr}
Cong Lei, Yuxuan Du, Peng Mi, Jun Yu, and Tongliang Liu.
\newblock Neural auto-designer for enhanced quantum kernels.
\newblock In \emph{International Conference on Learning Representations}, volume 2024, pp.\  17817--17841, 2024.

\bibitem[Lloyd et~al.(2020)Lloyd, Schuld, Ijaz, Izaac, and Killoran]{TQE}
Seth Lloyd, Maria Schuld, Aroosa Ijaz, Josh Izaac, and Nathan Killoran.
\newblock Quantum embeddings for machine learning.
\newblock \emph{arXiv preprint arXiv:2001.03622}, 2020.

\bibitem[Mandelbaum(2025)]{ibm_nighthawk}
Ryan Mandelbaum.
\newblock Scaling for quantum advantage and beyond.
\newblock IBM Quantum Computing Blog, November 2025.
\newblock URL \url{https://www.ibm.com/quantum/blog/qdc-2025}.

\bibitem[Moretti et~al.(2025)Moretti, Giachero, Radescu, and Grossi]{GA_moretti}
R~Moretti, A~Giachero, V~Radescu, and M~Grossi.
\newblock Enhanced feature encoding and classification on distributed quantum hardware.
\newblock \emph{Machine Learning: Science and Technology}, 6\penalty0 (1):\penalty0 015056, feb 2025.
\newblock \doi{10.1088/2632-2153/adb4bc}.
\newblock URL \url{https://doi.org/10.1088/2632-2153/adb4bc}.

\bibitem[Nguyen \& Chen(2022)Nguyen and Chen]{graph_search}
Nam Nguyen and Kwang-Cheng Chen.
\newblock Quantum embedding search for quantum machine learning.
\newblock \emph{IEEE Access}, 10:\penalty0 41444--41456, 2022.
\newblock \doi{10.1109/ACCESS.2022.3167398}.

\bibitem[Paulsen \& Raghupathi(2016)Paulsen and Raghupathi]{rkhs_paulsen}
Vern~I Paulsen and Mrinal Raghupathi.
\newblock \emph{An introduction to the theory of reproducing kernel Hilbert spaces}, volume 152.
\newblock Cambridge university press, 2016.

\bibitem[Pedregosa et~al.(2011)Pedregosa, Varoquaux, Gramfort, Michel, Thirion, Grisel, Blondel, Prettenhofer, Weiss, Dubourg, Vanderplas, Passos, Cournapeau, Brucher, Perrot, and Duchesnay]{scikit-learn}
F.~Pedregosa, G.~Varoquaux, A.~Gramfort, V.~Michel, B.~Thirion, O.~Grisel, M.~Blondel, P.~Prettenhofer, R.~Weiss, V.~Dubourg, J.~Vanderplas, A.~Passos, D.~Cournapeau, M.~Brucher, M.~Perrot, and E.~Duchesnay.
\newblock Scikit-learn: Machine learning in {P}ython.
\newblock \emph{Journal of Machine Learning Research}, 12:\penalty0 2825--2830, 2011.

\bibitem[Pérez-Salinas et~al.(2020)Pérez-Salinas, Cervera-Lierta, Gil-Fuster, and Latorre]{data_reupload}
Adrián Pérez-Salinas, Alba Cervera-Lierta, Elies Gil-Fuster, and José Latorre.
\newblock Data re-uploading for a universal quantum classifier.
\newblock \emph{Quantum}, 4:\penalty0 226, 02 2020.
\newblock \doi{10.22331/q-2020-02-06-226}.

\bibitem[Sahin et~al.(2024)Sahin, Symons, Pati, Minhas, Millar, Gabrani, Mensa, and Robertus]{sub_qkernel}
M.~Emre Sahin, Benjamin C.~B. Symons, Pushpak Pati, Fayyaz Minhas, Declan Millar, Maria Gabrani, Stefano Mensa, and Jan~Lukas Robertus.
\newblock Efficient {P}arameter {O}ptimisation for {Q}uantum {K}ernel {A}lignment: {A} {S}ub-sampling {A}pproach in {V}ariational {T}raining.
\newblock \emph{{Quantum}}, 8:\penalty0 1502, October 2024.
\newblock ISSN 2521-327X.
\newblock \doi{10.22331/q-2024-10-18-1502}.
\newblock URL \url{https://doi.org/10.22331/q-2024-10-18-1502}.

\bibitem[Schuld \& Killoran(2019)Schuld and Killoran]{q_featuremap}
Maria Schuld and Nathan Killoran.
\newblock Quantum machine learning in feature hilbert spaces.
\newblock \emph{Phys. Rev. Lett.}, 122:\penalty0 040504, Feb 2019.
\newblock \doi{10.1103/PhysRevLett.122.040504}.
\newblock URL \url{https://link.aps.org/doi/10.1103/PhysRevLett.122.040504}.

\bibitem[Schuld et~al.(2015)Schuld, Sinayskiy, and Petruccione]{qml_intro}
Maria Schuld, Ilya Sinayskiy, and Francesco Petruccione.
\newblock An introduction to quantum machine learning.
\newblock \emph{Contemporary Physics}, 56\penalty0 (2):\penalty0 172--185, 2015.
\newblock \doi{10.1080/00107514.2014.964942}.
\newblock URL \url{https://doi.org/10.1080/00107514.2014.964942}.

\bibitem[Schuld et~al.(2021)Schuld, Sweke, and Meyer]{encoding_expressive}
Maria Schuld, Ryan Sweke, and Johannes~Jakob Meyer.
\newblock Effect of data encoding on the expressive power of variational quantum-machine-learning models.
\newblock \emph{Phys. Rev. A}, 103:\penalty0 032430, Mar 2021.
\newblock \doi{10.1103/PhysRevA.103.032430}.
\newblock URL \url{https://link.aps.org/doi/10.1103/PhysRevA.103.032430}.

\bibitem[Shalev-Shwartz et~al.(2007)Shalev-Shwartz, Singer, and Srebro]{svm_obj}
Shai Shalev-Shwartz, Yoram Singer, and Nathan Srebro.
\newblock Pegasos: Primal estimated sub-gradient solver for svm.
\newblock In \emph{Proceedings of the 24th international conference on Machine learning}, pp.\  807--814, 2007.

\bibitem[Thanasilp et~al.(2023)Thanasilp, Wang, Nghiem, Coles, and Cerezo]{qml_trainability}
Supanut Thanasilp, Samson Wang, Nhat~Anh Nghiem, Patrick Coles, and Marco Cerezo.
\newblock Subtleties in the trainability of quantum machine learning models.
\newblock \emph{Quantum Machine Intelligence}, 5\penalty0 (1):\penalty0 21, 2023.

\bibitem[Tolstikhin et~al.(2018)Tolstikhin, Bousquet, Gelly, and Schoelkopf]{wae}
Ilya Tolstikhin, Olivier Bousquet, Sylvain Gelly, and Bernhard Schoelkopf.
\newblock Wasserstein auto-encoders.
\newblock In \emph{International Conference on Learning Representations}, 2018.
\newblock URL \url{https://openreview.net/forum?id=HkL7n1-0b}.

\end{thebibliography}
\end{document}